\documentclass[journal,onecolumn,draftclsnofoot]{IEEEtran}
\usepackage{graphicx} 

\usepackage{pgfplots}
\usepgfplotslibrary{groupplots}
\pgfplotsset{compat=1.18}

\usepackage{amsmath, amsfonts, amssymb, amsthm}
\usepackage{mathtools}
\usepackage{xcolor,hyperref}
\usepackage[capitalize]{cleveref}
\usepackage{bm}
\usepackage{xfrac}
\usepackage{url}
\usepackage{algorithm}
\usepackage{algpseudocodex}
\usepackage{xfrac}
\usepackage{bbm}

\theoremstyle{plain}
\newtheorem{lemma}{Lemma}
\newtheorem{theorem}{Theorem}
\newtheorem{corollary}{Corollary}
\newtheorem{proposition}{Proposition}

\theoremstyle{definition}
\newtheorem{definition}{Definition}
\newtheorem{example}{Example}

\theoremstyle{remark}
\newtheorem{remark}{Remark}

\newcommand{\CC}{\mathbb{C}}
\newcommand{\EE}{\mathbb{E}}
\newcommand{\FF}{\mathbb{F}}
\newcommand{\RR}{\mathbb{R}}
\newcommand{\ZZ}{\mathbb{Z}}

\newcommand{\C}{\mathcal{C}}

\renewcommand{\Re}{\operatorname{Re}}
\renewcommand{\Im}{\operatorname{Im}}
\newcommand{\imag}{\mathbbm{i}}

\renewcommand{\sec}{\mathrm{sec}}
\newcommand{\enc}{\mathrm{enc}}

\newcommand{\CN}{\mathcal{CN}}

\DeclareMathOperator{\Var}{Var}

\DeclareMathOperator{\tr}{tr}
\DeclareMathOperator{\diag}{diag}

\newcommand\independent{\protect\mathpalette{\protect\independenT}{\perp}}
\def\independenT#1#2{\mathrel{\rlap{$#1#2$}\mkern2mu{#1#2}}}

\title{Analog~Secure~Distributed~Matrix~Multiplication}
\author{Okko Makkonen, and Camilla Hollanti \thanks{The authors are with the Department of Mathematics and Systems Analysis, Aalto University, Espoo, Finland. (emails: okko.makkonen@aalto.fi, camilla.hollanti@aalto.fi)}
\thanks{Preliminary results were published in the Proceedings of the 2022 IEEE Internatioal Symposium on Information Theory \cite{makkonen2022analog}.}
\thanks{This work was partially supported by the Wallenberg AI, Autonomous Systems and Software Program. The work of O.~Makkonen is supported by the Vilho, Yrjö and Kalle Väisälä Foundation of the Finnish Academy of Science and Letters and by the Finnish Foundation for Technology Promotion.}
}
\date{\today}

\begin{document}

\maketitle

\begin{abstract}
In this paper, we present secure distributed matrix multiplication (SDMM) schemes over the complex numbers with good numerical stability and small mutual information leakage by utilizing polynomial interpolation with roots of unity. Furthermore, we give constructions utilizing the real numbers by first encoding the real matrices to smaller complex matrices using a technique we call \emph{complexification}. These schemes over the real numbers enjoy many of the benefits of the schemes over the complex numbers, including good numerical stability, but are computationally more efficient. To analyze the numerical stability and the mutual information leakage, we give some bounds on the condition numbers of Vandermonde matrices whose evaluation points are roots of unity.
\end{abstract}

\section{Introduction}

Secure distributed matrix multiplication (SDMM) refers to outsourcing the computation of a matrix product to worker nodes without revealing the contents of the matrices. SDMM was first introduced in \cite{chang2018capacity} and has been widely studied in, e.g.,  \cite{kakar2019capacity, d2020gasp, mital2022secure, byrne2023straggler, karpuk2024modular, machado2023hera, machado2022root}. The schemes presented in the literature utilize methods from coding theory to provide robustness and error tolerance, while the security of the schemes is based on secret sharing.

SDMM has been considered over finite fields due to the techniques in coding theory and secret sharing. However, applications of matrix multiplication in machine learning and artificial intelligence, as well as in data science and engineering, require computations with real valued data. Utilizing finite fields in this setting is possible by discretizing the real valued data to integers and performing the computations modulo a large prime number. This is a fixed point representation of the numbers as the numbers are represented with a fixed number of precision after the decimal point. One has to be careful to choose the prime number to be sufficiently large such that the computations do not ``overflow'', which would cause catastrophic errors.

As is usual in numerical computations, we wish to utilize \emph{floating point numbers}, which efficiently trades off magnitude and precision. They also have efficient hardware implementations compared to finite field arithmetic, which makes them attractive from an implementation point of view
Therefore, our aim is to construct SDMM schemes over the field of real numbers and the field of complex numbers such that the computations can be performed using floating point numbers. We refer to the real or complex numbers as the \emph{analog domain} as opposed to the discrete finite fields.
As is usual with numerical computations, one has to be careful to design numerically stable algorithms, since small rounding errors in the computations can otherwise accumulate.

\subsection{Contributions and Related Work}

Computing over the real numbers has raised the interest of the coded computing community. For example, analog distributed matrix multiplication without the security aspect was considered in \cite{ramamoorthy2022numericallya}. On the other hand, secure computation was considered in \cite{soleymani2021analog}, where the authors extended the Lagrange coded computing framework in \cite{yu2019lagrange} to the analog domain. Our preliminary conference article \cite{makkonen2022analog} considered secure distributed matrix multiplication over the complex numbers, but did not present constructions utilizing the real numbers. This paper extends the ideas in \cite{makkonen2022analog}, which now corresponds to \cref{sec:SDMM_over_complex_numbers}, and provides a much more thorough analysis as described in detail below.

In this paper, we present SDMM schemes over the complex numbers with good numerical stability and small mutual information leakage. In particular, we convert various schemes based on polynomial encoding to the complex domain by using roots of unity as evaluation points. Furthermore, we give constructions utilizing the real numbers by first encoding the real matrices to smaller complex matrices using a technique we call \emph{complexification}, which is similar to the rotation matrix embedding used in \cite{ramamoorthy2022numericallya}. These schemes over the real numbers enjoy many of the benefits of the schemes over the complex numbers, including good numerical stability, but are computationally more efficient. To analyze the numerical stability and the mutual information leakage, we give some bounds on the condition numbers of Vandermonde matrices whose evaluation points are roots of unity. These results may be of independent interest.

\subsection{Organization}

This paper is organized as follows. In \cref{sec:preliminaries}, we give preliminaries on numerical stability, information theory, and secret sharing. In \cref{sec:linear_SDMM}, we go over the basics of linear SDMM schemes. In \cref{sec:SDMM_over_complex_numbers}, we present our SDMM constructions over the complex numbers, which utilize polynomial interpolation with roots of unity as evaluation points. In \cref{sec:SDMM_over_real_numbers}, we present an encoding scheme to utilize the techniques over the complex numbers by first encoding the real matrices as smaller complex matrices. This technique allows us to reduce the communication and computation costs compared to simply embedding the real matrices as complex matrices. In \cref{sec:numerical_experiments}, we present numerical experiments that verify the numerical accuracy of our methods as a function of the information leakage. Finally, in \cref{sec:appendix} we derive some bounds on the condition numbers of Vandermonde matrices whose evaluation points are some subset of the $n$th roots of unity.

\section{Preliminaries}\label{sec:preliminaries}

We denote the set $\{1, 2, \dots, n\}$ by $[n]$. We will denote an arbitrary field with $\FF$ and a finite field with $q$ elements by $\FF_q$. Vectors are either row or column vectors depending on the context. We denote a vector norm and the induced matrix norm by $\lVert \cdot \rVert$. We will consider mainly the 2-norm, the $\infty$-norm, and the Frobenius norm and write these as $\lVert \cdot \rVert_2$, $\lVert \cdot \rVert_\infty$, and $\lVert \cdot \lVert_F$, respectively, if we want to emphasize the particular norm. We denote $[n]$ for the set $\{1, \dots, n\}$. Laurent polynomials over the field $\FF$ are polynomials in $\FF[z, z^{-1}]$, i.e., they are univariate polynomials that are allowed to have negative exponents. We define the evaluation of a Laurent polynomial $f$ at $\alpha \neq 0$ in the obvious way and write $f(\alpha) \in \FF$. We write $f(z)$ to emphasize that $f$ is a Laurent polynomial in the variable $z$. If $f \in \CC[z, z^{-1}]$ is a Laurent polynomial over the complex numbers, then $\overline{f} \in \CC[z, z^{-1}]$ denotes the coefficientwise conjugation.

Let $\alpha_1, \alpha_2, \dots, \alpha_n \in \FF$ be distinct elements and $\gamma_1, \gamma_2, \dots, \gamma_k \in \ZZ$ be distinct integers. A $k \times n$ matrix $V$ with entries $V_{ji} = \alpha_i^{\gamma_j}$ is said to be a \emph{generalized Vandermonde matrix} over the field $\FF$. The elements $\alpha_i$ are the \emph{evaluation points} and the elements $\gamma_j$ are the \emph{exponents}. Some authors only allow non-negative exponents, but we will make no distinction between positive or negative exponents. Such a matrix with exponents $\gamma_j = j - 1$ is said to be a \emph{Vandermonde matrix}.

We write the imaginary unit as $\imag$, i.e., $\imag^2 = -1$. By writing $(a + \imag b)(c + \imag d) = (ac - bd) + \imag(ad + bc)$, the product of two complex numbers can be computed with 4 real multiplications and 2 additions. On the other hand, by writing
\begin{equation}\label{eq:karatsuba_algorithm_complex_numbers}
    (a + \imag b)(c + \imag d) = (ac - bd) + \imag((a + c)(b + d) - ac - bd)
\end{equation}
the computation can be done with 3 real multiplications and 5 additions. This method is called the 3M method, see \cite[Chapter~22]{higham2002accuracy}. If multiplications are more expensive than additions, like in the case of matrices, it may be beneficial to utilize this method. On the other hand, if multiplications and additions are equally expensive, then the naive algorithm only uses 6 operations compared to 8 in \eqref{eq:karatsuba_algorithm_complex_numbers}.

\subsection{Numerical Stability and Condition Numbers}\label{sec:numerical_stability}

In this section, we review some concepts from numerical analysis, such as the sources of errors in numerical computations and the accumulation of these errors. For a comprehensive reference on the topic, see \cite{higham2002accuracy}.

Representing continuous values on a computer is often done with a floating point representation as it has a good trade-off between precision and being able to represent numbers of varying magnitudes. Additionally, floating point arithmetic has great implementation support on modern computer hardware, which makes it preferable for many applications. However, due to the finite precision that is available with floating point numbers, errors are bound to happen in computations. In particular, after the computation of an arithmetic operation, the result has to be rounded back to a floating point number. The maximal relative error in the rounding is known as the \emph{unit roundoff} and denoted by $u$. For example, for the IEEE 754 single-precision floating point numbers, $u \approx 5.96 \cdot 10^{-8}$.

Performing a computation consisting of multiple operations will yield many round-off errors and the total error in the result is an accumulation of the (small) individual errors. For example, computing an inner product of vectors $x, y \in \RR^n$ will yield an error of at most $\gamma_n |x|^T |y|$, where $\gamma_n = \frac{nu}{1 - nu}$ and $|\cdot|$ denotes the componentwise absolute value. Hence, it is not possible to compute the product to high relative accuracy since $|x|^T |y|$ can be significantly larger than $|x^T y|$.
In addition to inner products, the accuracy of matrix multiplication can be analyzed similarly. A product of matrices $A$ and $B$ can be computed to an absolute error proportional to $\lVert A \rVert \cdot \lVert B \rVert$.

The inherent sensitivity to errors is measured by the condition number of a computational problem. Consider the system of linear equations given by $Ax = b$ for some matrix $A$ and vector $b$. Let us assume that we know the value of $A$ exactly, but instead of $b$ we have $b + \delta b$, where $\frac{\lVert \delta b \rVert}{\lVert b \rVert}$ is some relative error. Given this, we are interested in bounding the relative error in the solution $x + \delta x$. The condition number of a matrix $A$ with respect to the given norm is defined as $\kappa(A) = \lVert A \rVert \lVert A^{-1} \rVert$. Then, the relative error in the solution is
\begin{equation*}
    \frac{\lVert \delta x \rVert}{\lVert x \rVert} \leq \kappa(A) \frac{\lVert \delta b \rVert}{\lVert b \rVert}.
\end{equation*}
Let $\kappa(A)$ and $\kappa'(A)$ be condition numbers of a matrix with respect to two different norms. As all norms on a finite dimensional vector spaces are equivalent, there exists a constant $c$ that only depends on the norms such that $\kappa(A) \leq c \kappa'(A)$ for all $A$. A rule-of-thumb is that we lose $k$ decimal digits of precision if $\kappa(A) \approx 10^k$.

One problem that can be solved by a system of linear equations is polynomial interpolation. Let $f \in \FF[x]$ be a polynomial of degree $< m$ written as $f(x) = f_0 + f_1 x + \dots + f_{m-1} x^{m-1}$. Given $m$ evaluations of $f$ at distinct points $\alpha_1, \dots, \alpha_m$, it is possible to decode the coefficients of $f$ with polynomial interpolation. In particular, we may write
\begin{equation*}
    \begin{pmatrix}
        f(\alpha_1) \\
        f(\alpha_2) \\
        \vdots \\
        f(\alpha_m)
    \end{pmatrix} =
    \begin{pmatrix}
        1 & \alpha_1 & \cdots & \alpha_1^{m-1} \\
        1 & \alpha_2 & \cdots & \alpha_2^{m-1} \\
        \vdots & \vdots & \ddots & \vdots \\
        1 & \alpha_m & \cdots & \alpha_m^{m-1}
    \end{pmatrix}
    \begin{pmatrix}
        f_0 \\
        f_1 \\
        \vdots \\
        f_{m-1}
    \end{pmatrix}
\end{equation*}
The coefficients can be obtained from the evaluations by solving the above system. The matrix on the right is (the transpose of) a Vandermonde matrix with evaluation points $\alpha_1, \dots, \alpha_m$. Therefore, the condition number of polynomial interpolation is determined by the condition number of the corresponding Vandermonde matrix whose evaluation points are $\alpha_1, \dots, \alpha_m$. Similarly, if $f \in \FF[z, z^{-1}]$ written as $f(z) = f_{-\ell} z^{-\ell} + f_{-\ell + 1} z^{-\ell + 1} + \cdots + f_0 + f_1 z + \cdots f_k z^k$, then we can solve the coefficients from $m = k + \ell - 1$ evaluations by considering the polynomial $z^\ell f(z)$.

It is known that the condition number of a Vandermonde matrix with real evaluation points grows exponentially as a function of $m$, see \cite{gautschi1987lower}. On the other hand, if the evaluation points are chosen to be $m$th roots of unity in the complex numbers, the condition number achieves the lowest possible value of 1. Low condition numbers can also be obtained if the evaluation points are more or less equally spaced on the unit circle in $\CC$, see \cite{pan2016how}. Indeed, it was shown in \cite{ramamoorthy2022numericallya} that the condition number of an $m \times m$ Vandermonde matrix, whose evaluation points are $n$th roots of unity for $n = m + d$ for some constant $d$, grows polynomially in $n$. In particular, it was shown that $\kappa_2(V) = \mathcal{O}(n^{d + 5.5})$. In \cref{sec:appendix}, we improve this bound to $\kappa_2(V) = \mathcal{O}(n^{d + 2})$, see \cref{thm:large_vandermonde_inverse_2_norm_upper_bound}. We also show that the norm of the inverse of an $m \times m$ Vandermonde matrix with $n$th roots of unity as evaluation points is $\mathcal{O}(n^{m - 1})$ if $m$ is constant, see \cref{thm:small_vandermonde_inverse_2_norm_upper_bound} and \cref{cor:small_vandermonde_inverse_frobenius_norm_upper_bound}. We use these results later to show the security of our proposed schemes and to analyze the numerical stability.

\subsection{Information Theory and Channel Capacity}\label{sec:information_theory}

The mutual information between two random variables is denoted by $I(\bm{x}; \bm{y})$ and the differential entropy and conditional entropy by $h(\bm{x})$ and $h(\bm{x} \mid \bm{y})$, respectively. The covariance matrix of a random (column) vector $\bm{x}$ with expected value $\mu = \EE[\bm{x}]$ is $\Var(\bm{x}) = \EE[(\bm{x} - \mu)(\bm{x} - \mu)^*]$, often denoted by $\Sigma$. We say that a complex random variable $\bm{z}$ is circularly-symmetric if $e^{\imag \theta} \bm{z}$ and $\bm{z}$ are identically distributed for all $\theta \in \RR$. The circularly-symmetric (central) complex normal distribution with covariance $\Sigma$ is denoted by $\CN(0, \Sigma)$. It turns out that this distribution maximizes the differential entropy of a random vector.

\begin{proposition}[{\cite[Theorem 2]{neeser1993proper}}]\label{prop:entropy_upper_bound}
Let $\bm{z}$ be a continuous complex random vector of length $n$ with a nonsingular covariance matrix $\Sigma$. Then $h(\bm{z}) \leq \log((\pi e)^n \det(\Sigma))$, with equality if and only if $\bm{z}$ is a circularly-symmetric complex normal random vector.
\end{proposition}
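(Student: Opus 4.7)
The plan is to reduce to the classical real entropy bound by identifying $\bm{z} \in \CC^n$ with the stacked real vector $\bm{x} = \bigl(\Re(\bm{z})^T,\ \Im(\bm{z})^T\bigr)^T \in \RR^{2n}$, so that $h(\bm{z}) = h(\bm{x})$ by definition. Since both entropy and covariance are translation invariant, I may assume $\EE[\bm{z}] = 0$. Let $K \in \RR^{2n \times 2n}$ denote the real covariance of $\bm{x}$. The standard real entropy maximization result for Gaussians in $\RR^{2n}$ then gives
\begin{equation*}
    h(\bm{z}) = h(\bm{x}) \leq \tfrac{1}{2}\log\!\bigl((2\pi e)^{2n}\det(K)\bigr),
\end{equation*}
with equality if and only if $\bm{x}$ is jointly real Gaussian.

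The core of the argument is the determinant bound $\det(K) \leq \det(\Sigma)^2 / 4^n$. To derive it, I would form the augmented vector $\tilde{\bm{z}} = (\bm{z}^T,\ \overline{\bm{z}}^T)^T = T\bm{x}$ with $T = \bigl(\begin{smallmatrix} I & \imag I \\ I & -\imag I\end{smallmatrix}\bigr)$, which satisfies $|\det T|^2 = 4^n$, and observe
\begin{equation*}
    T K T^* = \EE[\tilde{\bm{z}} \tilde{\bm{z}}^*] = \begin{pmatrix} \Sigma & \tilde{\Sigma} \\ \overline{\tilde{\Sigma}} & \overline{\Sigma} \end{pmatrix},
\end{equation*}
where $\tilde{\Sigma} = \EE[\bm{z}\bm{z}^T]$ is the pseudo-covariance. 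Taking determinants and applying the Schur complement with respect to the $(1,1)$ block gives $4^n \det(K) = \det(\Sigma)\det\!\bigl(\overline{\Sigma} - \overline{\tilde{\Sigma}}\Sigma^{-1}\tilde{\Sigma}\bigr)$. Since the block matrix above is Hermitian positive semidefinite (being a covariance), its Schur complement satisfies $0 \preceq \overline{\Sigma} - \overline{\tilde{\Sigma}}\Sigma^{-1}\tilde{\Sigma} \preceq \overline{\Sigma}$, and monotonicity of the determinant on the PSD order yields $\det(K) \leq \det(\Sigma)^2/4^n$, with equality if and only if $\tilde{\Sigma} = 0$, i.e., $\bm{z}$ is proper.

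Combining the two inequalities,
\begin{equation*}
    h(\bm{z}) \leq \tfrac{1}{2}\log\!\bigl((2\pi e)^{2n} \det(\Sigma)^2 / 4^n\bigr) = \log\!\bigl((\pi e)^n \det(\Sigma)\bigr),
\end{equation*}
which is the claimed bound. Equality requires simultaneously that $\bm{x}$ is real Gaussian \emph{and} $\tilde{\Sigma} = 0$; together these precisely characterize $\bm{z} \sim \CN(0,\Sigma)$, since a proper complex Gaussian is automatically circularly-symmetric. The main obstacle I anticipate is the determinant-monotonicity step on the Schur complement and pinning down the equality case $\tilde{\Sigma} = 0$; the rest is bookkeeping to collapse the $(2\pi e)^{2n}/4^n$ factor into $(\pi e)^{2n}$ and match the factor of $2$ coming from $\det(\Sigma)^2$.
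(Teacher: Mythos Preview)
Your proof is correct and follows the standard route (reduce to the real Gaussian bound, then control $\det K$ via the augmented covariance and a Schur complement). However, the paper does not actually prove this proposition: it is stated as a citation of \cite[Theorem~2]{neeser1993proper} and used as a black box in the proof of \cref{lem:AWGN_MIMO_capacity}, so there is no in-paper argument to compare against. Your write-up is essentially the original Neeser--Massey argument, which is exactly what the citation points to.
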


The following standard lemma essentially determines the capacity of the multiple-input multiple-output (MIMO) additive white Gaussian noise (AWGN) channel. We include the proof here for completeness.

\begin{lemma}\label{lem:AWGN_MIMO_capacity}
Let $\bm{x} \in \CC^m$ and $\bm{z} \in \CC^n$ be independent random variables with $Q = \Var(\bm{x})$ and $\bm{z} \sim \CN(0, \sigma^2 \cdot I_n)$. Let $H \in \CC^{n \times m}$ and define the random variable $\bm{y} = H \bm{x} + \bm{z} \in \CC^n$. Then
\begin{equation*}
    I(\bm{x}; \bm{y}) \leq \log(\det(I_n + \tfrac{1}{\sigma^2} H Q H^*)).
\end{equation*}
\end{lemma}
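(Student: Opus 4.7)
The plan is to use the standard decomposition $I(\bm{x};\bm{y}) = h(\bm{y}) - h(\bm{y} \mid \bm{x})$ and then bound $h(\bm{y})$ via \cref{prop:entropy_upper_bound}, while computing $h(\bm{y}\mid\bm{x})$ exactly using the independence of $\bm{x}$ and $\bm{z}$ and the fact that $\bm{z}$ is circularly-symmetric complex normal.

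First I would observe that since $\bm{y} = H\bm{x} + \bm{z}$ with $\bm{z}$ independent of $\bm{x}$, conditioning on $\bm{x}$ leaves only the noise contribution, so $h(\bm{y}\mid\bm{x}) = h(\bm{z}) = \log((\pi e \sigma^2)^n) = \log((\pi e)^n \det(\sigma^2 I_n))$, using the equality case in \cref{prop:entropy_upper_bound} applied to $\bm{z} \sim \CN(0,\sigma^2 I_n)$. Next, I would compute the covariance of $\bm{y}$ using independence of $\bm{x}$ and $\bm{z}$, which gives $\Var(\bm{y}) = H Q H^* + \sigma^2 I_n$. Since $\sigma^2 I_n$ is positive definite and $HQH^*$ is positive semidefinite, this covariance matrix is nonsingular, so \cref{prop:entropy_upper_bound} applies and yields $h(\bm{y}) \leq \log((\pi e)^n \det(HQH^* + \sigma^2 I_n))$.

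Subtracting the two bounds gives
\begin{equation*}
I(\bm{x};\bm{y}) \leq \log \frac{\det(HQH^* + \sigma^2 I_n)}{\det(\sigma^2 I_n)} = \log \det\!\left(I_n + \tfrac{1}{\sigma^2} H Q H^*\right),
\end{equation*}
where the final equality uses $\det(\sigma^2 I_n)^{-1} \det(HQH^* + \sigma^2 I_n) = \det(\sigma^{-2}(HQH^* + \sigma^2 I_n))$.

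There is no real obstacle here beyond verifying that the entropies involved are well-defined; the only subtle point is checking that $\Var(\bm{y})$ is nonsingular so that \cref{prop:entropy_upper_bound} can be applied, which follows immediately from $\sigma^2 > 0$ and the positive semidefiniteness of $HQH^*$. If $Q$ were singular or $\bm{x}$ were degenerate, the argument still works because we only need the upper bound on $h(\bm{y})$, not on $h(\bm{x})$.
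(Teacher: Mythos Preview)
Your proof is correct and follows essentially the same approach as the paper: both use the decomposition $I(\bm{x};\bm{y}) = h(\bm{y}) - h(\bm{z})$, bound $h(\bm{y})$ via \cref{prop:entropy_upper_bound} after computing $\Var(\bm{y}) = HQH^* + \sigma^2 I_n$, and take the ratio of determinants. You are slightly more explicit in justifying $h(\bm{y}\mid\bm{x}) = h(\bm{z})$ and in checking nonsingularity of $\Var(\bm{y})$, but the argument is the same.
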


\begin{proof}
By using the definition of mutual information and the upper bound for the differential entropy in \cref{prop:entropy_upper_bound} we get that
\begin{align*}
    I(\bm{x}; \bm{y}) &= h(\bm{y}) - h(\bm{z}) \\
    &\leq \log((\pi e)^n \det(\Var(\bm{y}))) - \log((\pi e)^n \det(\Var(\bm{z}))) \\
    &= \log \left( \frac{\det(\Var(\bm{y}))}{\det(\Var(\bm{z}))} \right).
\end{align*}
By independence of $H\bm{x}$ and $\bm{z}$, we obtain
\begin{align*}
    \det(\Var(\bm{y})) &= \det(\Var(H\bm{x}) + \Var(\bm{z})) \\
    &= \det(HQH^* + \sigma^2 I_n).
\end{align*}
Additionally, $\det(\Var(\bm{z})) = \det(\sigma^2 \cdot I) = (\sigma^2)^n$, so the result follows.
\end{proof}

The above bound on the mutual information still depends on the covariance matrix of the input $\bm{x}$. By imposing the power-constraint $\tr(Q) \leq P$, we obtain the following result, which works well for $\sigma^2 \gg 1$. The below uses properties of the natural logarithm, so the mutual information is measured in base $e$, i.e., in the unit \emph{nat}.

\begin{theorem}\label{thm:AWGN_MIMO_power_constrained_capacity}
Let $\bm{x} \in \CC^m$ and $\bm{z} \in \CC^n$ be independent random variables with $Q = \Var(\bm{x})$ and $\bm{z} \sim \CN(0, \sigma^2 \cdot I_n)$. Let $H \in \CC^{n \times m}$ and define the random variable $\bm{y} = H \bm{x} + \bm{z} \in \CC^n$. If $\tr(Q) \leq P$, then
\begin{equation*}
    I(\bm{x}; \bm{y}) \leq \tfrac{P}{\sigma^2} \lVert H \rVert_F^2.
\end{equation*}
\end{theorem}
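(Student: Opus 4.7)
The plan is to combine \cref{lem:AWGN_MIMO_capacity} with two elementary inequalities: first the scalar bound $\log(1+t) \leq t$ (valid for the natural logarithm on $t \geq 0$) promoted to a matrix statement, and then a trace inequality that converts the power constraint $\tr(Q) \leq P$ into the claimed Frobenius-norm bound.

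First I would apply \cref{lem:AWGN_MIMO_capacity} to obtain $I(\bm{x}; \bm{y}) \leq \log \det(I_n + \tfrac{1}{\sigma^2} H Q H^*)$. Since $Q = \Var(\bm{x})$ is Hermitian PSD, the matrix $M := \tfrac{1}{\sigma^2} H Q H^*$ is also Hermitian PSD, so its eigenvalues $\mu_1,\dots,\mu_n$ are non-negative. Diagonalizing gives
\begin{equation*}
    \log \det(I_n + M) = \sum_{i=1}^n \log(1 + \mu_i) \leq \sum_{i=1}^n \mu_i = \tr(M) = \tfrac{1}{\sigma^2} \tr(H Q H^*),
\end{equation*}
where the inequality uses $\log(1+t) \leq t$ for $t \geq 0$. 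This already reduces the problem to bounding $\tr(HQH^*)$.

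Next I would rewrite $\tr(HQH^*) = \tr(Q H^* H)$ by the cyclic property of the trace and apply the standard fact that, for PSD matrices $A,B$, one has $\tr(AB) \leq \lVert A \rVert_2 \cdot \tr(B)$. (This is easily seen by writing $\tr(AB) = \tr(B^{1/2} A B^{1/2}) = \sum_i (B^{1/2} e_i)^* A (B^{1/2} e_i) \leq \lVert A \rVert_2 \sum_i e_i^* B e_i = \lVert A \rVert_2 \tr(B)$.) Taking $A = Q$ and $B = H^* H$ yields
\begin{equation*}
    \tr(Q H^* H) \leq \lVert Q \rVert_2 \cdot \tr(H^* H) = \lambda_{\max}(Q) \cdot \lVert H \rVert_F^2.
\end{equation*}
Finally, because $Q$ is PSD, its largest eigenvalue is dominated by its trace, so $\lambda_{\max}(Q) \leq \tr(Q) \leq P$. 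Chaining everything gives $I(\bm{x};\bm{y}) \leq \tfrac{1}{\sigma^2} P \lVert H \rVert_F^2$, as required.

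There is no real obstacle here — the proof is a composition of standard inequalities — but the only design choice worth flagging is which side of the PSD trace inequality to use. Using $\tr(AB) \le \lVert A \rVert_2 \tr(B)$ with $A=Q,\;B=H^*H$ lands on $\lVert H \rVert_F^2$ together with the power constraint on $Q$; the symmetric variant $\tr(AB)\le\lVert B\rVert_2\tr(A)$ would instead yield the (tighter but less convenient) $\lVert H \rVert_2^2$. Since the theorem is stated with the Frobenius norm, the above orientation is the natural one.
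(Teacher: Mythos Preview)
Your proof is correct and follows essentially the same route as the paper: apply \cref{lem:AWGN_MIMO_capacity}, use $\log\det(I+M)\le\tr(M)$ for PSD $M$, and then bound $\tr(HQH^*)$ via a trace inequality for PSD matrices. The only cosmetic difference is that the paper invokes $\tr(AB)\le\tr(A)\tr(B)$ directly, whereas you pass through the sharper $\tr(AB)\le\lVert A\rVert_2\tr(B)$ and then use $\lambda_{\max}(Q)\le\tr(Q)$; the end result is identical.
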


\begin{proof}
If $A$ is a positive semi-definite matrix, then $\log(\det(I + A)) \leq \tr(A)$, which is easy to check by considering the eigenvalues. The matrix $HQH^*$ is clearly positive semi-definite. Then, by \cref{lem:AWGN_MIMO_capacity}, and the submultiplicativity of the trace on positive semi-definite matrices
\begin{align*}
    I(\bm{x}; \bm{y}) &\leq \log(\det(I_n + \tfrac{1}{\sigma^2} HQH^*)) \\
    &\leq \tfrac{1}{\sigma^2} \tr(HQH^*) \\
    &\leq \tfrac{1}{\sigma^2} \tr(HH^*) \tr(Q).
\end{align*}
The result follows as $\tr(HH^*) = \lVert H \rVert_F^2$.
\end{proof}

\subsection{Secret Sharing}

Secret sharing is a method of distributing a secret value $s$ among $n$ participants such that the secret can be decoded from the shares held by the participants, but such that the shares of a small number of participants reveals no information about the secret value. In other words, if an adversary is able to obtain the shares from a only few participants, they will not be able to reveal the secret value. If any at most $t$ shares reveal no information about the secret value, then we say that the secret sharing scheme is \emph{$t$-secure}. For an overview on secret sharing schemes, see \cite{beimel2011secret}. We will be interested in secret sharing schemes obtained from linear codes and their generator matrices.

\begin{definition}\label{def:nested_coset_coding_scheme}
A \emph{nested coset coding scheme} over the field $\FF$ consists of a pair $(G^\enc, G^\sec)$ of full rank matrices $G^\enc \in \FF^{m \times n}$ and $G^\sec \in \FF^{k \times n}$ whose row spaces intersect trivially. The secret message $s \in \FF^m$ is encoded to the shares as $\hat{s} = sG^\enc + rG^\sec \in \FF^n$, where $r \in \FF^k$ is chosen from a suitable distribution. Party $i \in [n]$ gets the coordinate $\hat{s}_i$.
\end{definition}

This construction has been seen in the literature in, e.g., \cite[Section~4.2]{chen2007secure}, but it is usually presented in terms of the codes $(\C^\enc \oplus \C^\sec, \C^\sec)$, where $\C^\enc, \C^\sec$ are generated by $G^\enc, G^\sec$, respectively. We choose to keep track of the generator matrices instead of the codes so that we have a fixed basis. In the secret sharing literature, the above is also a special case of a multi-target monotone span program, see the survey \cite{beimel2011secret}.

Let $(G^\enc, G^\sec)$ define a nested coset coding scheme. As the row spaces of $G^\enc$ and $G^\sec$ intersect trivially, the vector $\hat{s}$ can be uniquely decomposed as the sum of $s G^\enc$ and $r G^\sec$ and the secret value $s$ can be obtained from $\hat{s}$. On the other hand, an adversary will obtain some coordinates of $\hat{s}$, say those indexed by $\mathcal{T} \subset [n]$ with $|\mathcal{T}| \leq t$. What the adversary sees is $\hat{s}_\mathcal{T} = s G^\enc_\mathcal{T} + r G^\sec_\mathcal{T} \in \FF^t$. This can be seen as the output of a noisy communication channel with input $s$ and output $\hat{s}_\mathcal{T}$. The goal of designing good secret sharing schemes is the opposite of designing good communication channels, since we want to \emph{minimize} the mutual information between the input and the output.

If $\FF = \FF_q$ is a finite field, we may choose $r \in \FF_q^k$ uniformly at random. If $k = t$ and $G^\sec_\mathcal{T}$ is invertible for all $\mathcal{T} \subset [n]$ with $|\mathcal{T}| = t$ (i.e. if $G^\sec$ has the MDS property), then $\hat{s}_\mathcal{T}$ is uniformly distributed and reveals no information about the secret value $s$. In general, we require that the dual minimum distance of the code generated by $G^\sec$ is at least $t + 1$ since then any $t$ columns of $G^\sec$ are linearly independent.

\begin{example}[Shamir secret sharing]
The well-known Shamir secret sharing scheme can be described by using a generator matrix
\begin{equation*}
    G = \begin{pmatrix}
        1 & 1 & \cdots & 1 \\
        \alpha_1 & \alpha_2 & \dots & \alpha_n \\
        \vdots & \vdots & \ddots & \vdots \\
        \alpha_1^{m + t - 1} & \alpha_2^{m + t - 1} & \cdots & \alpha_n^{m + t - 1}
    \end{pmatrix} = \begin{pmatrix}
        G^\enc \\
        G^\sec
    \end{pmatrix}.
\end{equation*}
The shares are the coordinates of $\hat{s} = (s, r) G = s G^\enc + r G^\sec$, where $G^\enc$ and $G^\sec$ consist of the first $m$ and last $t$ rows of $G$, respectively. As long as $\alpha_1, \dots, \alpha_n \neq 0$ and the evaluation points $\alpha_1, \dots, \alpha_n$ are pairwise distinct, the matrix $G^\sec$ has the MDS property. This means that the Shamir secret sharing scheme is $t$-secure. Furthermore, as the vector of shares $\hat{s}$ is a codeword in a Reed\nobreakdash--Solomon code of dimension $m + t$, it is clear that any $m + t$ shares are sufficient to decode the secret $s$.
\end{example}

If the field is $\FF = \RR$ or $\FF = \CC$, we can not choose the entries in the random vector $r$ uniformly at random anymore which means that we can not achieve perfect information theoretic privacy. Instead, we measure the privacy of such a scheme by the information leakage.

\begin{definition}[$(t, \delta)$-security]
Assume that an adversary has access to the shares of the parties indexed by $\mathcal{T} \subset [n]$. We measure the \emph{average leakage per input symbol} of a secret sharing scheme as $\mathcal{L}(\bm{s} \to \hat{\bm{s}}_\mathcal{T}) = \frac{1}{m}I(\bm{s}; \hat{\bm{s}}_\mathcal{T})$, where the secret $\bm{s}$ consists of $m$ symbols. Let $\delta > 0$ denote the tolerated amount of leakage. We say that a secret sharing scheme is \emph{$(t, \delta)$\nobreakdash-secure} if
\begin{equation*}
    \max_{\substack{\mathcal{T} \subset [n] \\ |\mathcal{T}| = t}} \mathcal{L}(\bm{s} \to \hat{\bm{s}}_\mathcal{T}) \leq \delta
\end{equation*}
for all distributions of $\bm{s}$ such that $\lVert \bm{s} \rVert_\infty \leq 1$, i.e., if the leakage is sufficiently small for all subsets of size $t$. Notice that $(t, 0)$-security matches with the definition of $t$-security from above.
\end{definition}

As is the case with communication over noisy channels, a continuous channel can have infinite capacity if there are no constraints on the possible inputs. For this reason, we impose that $\lVert s \rVert_\infty$ is at most 1. The choice of the upper bound is arbitrary, but we choose the constant 1 to have fewer parameters in our formulas. We normalize the information leakage for the number of symbols $m$ to make comparisons simpler, though the parameter $\delta$ could be rescaled to remove this factor. Apart from the normalization term, our definition of security is the same as the semantic security in \cite{bloch2021overview}. We remark that we do not consider the security asymptotically as $m \to \infty$, so our definition does not coincide with the notion of weak secrecy.

The authors in \cite{tjell2021privacy} proposed choosing the random vector from a normal distribution with sufficiently large variance. We can recreate the Shamir secret sharing scheme over $\CC$ by choosing the evaluation points $\alpha_1, \dots, \alpha_n \in \CC$ as the $n$th roots of unity. The theorem below determines the required noise variance to achieve $(t, \delta)$-security. Notice that the computation of the required noise variance is simple and does not require maximization over a large number of cases as in \cite[Corollary~2]{soleymani2021analog}.

Consider the sequence defined by the product of the first $m$ entries of the sequence $1, 1, 2, 2, 3, 3, 4, 4, \dots$. It is simple to verify that the $m$th entry is given by $\Pi(m)$, where $\Pi$ is defined by $\Pi(2k) = k!^2$ and $\Pi(2k + 1) = k!^2 \cdot (k + 1)$. This sequence is A010551 on the OEIS \cite{oeis}. This sequence is not fundamental to the problem, but just falls from the analysis in \cref{cor:small_vandermonde_inverse_frobenius_norm_upper_bound}. The constant depending on $t$ could be improved with more thorough analysis.

\begin{theorem}\label{thm:secret_sharing_complex_numbers}
Let $(G^\enc, G^\sec)$ define a nested coset coding scheme over $\CC$, where $G^\enc \in \CC^{m \times n}$ and $G^\sec \in \CC^{t \times n}$ are generalized Vandermonde matrices with the $n$th roots of unity as evaluation points. Furthermore, assume that the exponents in $G^\sec$ are consecutive. By setting the noise variance as
\begin{equation*}
    \sigma^2 = \frac{1}{\delta} \cdot \frac{t^3 m}{ 4^{t - 1} \Pi(t - 1)^2} n^{2t - 2},
\end{equation*}
where $\Pi$ is defined above, the corresponding secret sharing scheme is $(t, \delta)$-secure.
\end{theorem}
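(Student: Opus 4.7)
The plan is to reduce the analysis to a MIMO AWGN channel in standard form, bound its mutual information via \cref{thm:AWGN_MIMO_power_constrained_capacity}, and then control the resulting Frobenius norm using \cref{cor:small_vandermonde_inverse_frobenius_norm_upper_bound}. Fix a subset $\mathcal{T} \subset [n]$ of size $t$. The adversary observes $\hat{\bm{s}}_\mathcal{T} = \bm{s} G^{\enc}_\mathcal{T} + \bm{r} G^{\sec}_\mathcal{T}$, where $\bm{r}$ is drawn from $\CN(0, \sigma^2 I_t)$ as in \cite{tjell2021privacy}. Using the consecutive-exponents assumption on $G^{\sec}$, say $\gamma_0, \gamma_0 + 1, \dots, \gamma_0 + t - 1$, I factor $G^{\sec}_\mathcal{T} = V D$, where $V$ is the $t \times t$ standard Vandermonde matrix in the $t$ roots of unity indexed by $\mathcal{T}$ and $D = \diag(\alpha_i^{\gamma_0})_{i \in \mathcal{T}}$ is diagonal with unit-modulus entries. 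Both factors are invertible, hence so is $G^{\sec}_\mathcal{T}$.

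Right-multiplying $\hat{\bm{s}}_\mathcal{T}$ by $(G^{\sec}_\mathcal{T})^{-1}$ is a bijective linear transformation, so it preserves mutual information, giving
\begin{equation*}
I(\bm{s}; \hat{\bm{s}}_\mathcal{T}) = I(\bm{s};\, \bm{s} G^{\enc}_\mathcal{T} (G^{\sec}_\mathcal{T})^{-1} + \bm{r}).
\end{equation*}
Transposing puts this in the standard MIMO AWGN form $\bm{y} = H\bm{x} + \bm{z}$ of \cref{thm:AWGN_MIMO_power_constrained_capacity} with $\bm{x} = \bm{s}^T$, $\bm{z} = \bm{r}^T \sim \CN(0, \sigma^2 I_t)$, and $H = (G^{\enc}_\mathcal{T} (G^{\sec}_\mathcal{T})^{-1})^T$. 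The hypothesis $\lVert \bm{s} \rVert_\infty \leq 1$ yields $\EE[|s_i|^2] \leq 1$ for each $i$, hence $\tr(\Var(\bm{x})) \leq m$, so the theorem applies with $P = m$ and produces
\begin{equation*}
I(\bm{s}; \hat{\bm{s}}_\mathcal{T}) \leq \frac{m}{\sigma^2} \lVert G^{\enc}_\mathcal{T} (G^{\sec}_\mathcal{T})^{-1} \rVert_F^2.
\end{equation*}

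To bound the remaining Frobenius norm, I write $G^{\enc}_\mathcal{T} (G^{\sec}_\mathcal{T})^{-1} = (G^{\enc}_\mathcal{T} D^{-1}) V^{-1}$ and use the submultiplicative inequality $\lVert AB \rVert_F \leq \lVert A \rVert_F \lVert B \rVert_2$. Every entry of $G^{\enc}_\mathcal{T} D^{-1}$ has unit modulus, so $\lVert G^{\enc}_\mathcal{T} D^{-1} \rVert_F = \sqrt{mt}$, and trivially $\lVert V^{-1} \rVert_2 \leq \lVert V^{-1} \rVert_F$. Substituting the bound $\lVert V^{-1} \rVert_F \leq \frac{t}{2^{t-1} \Pi(t-1)} n^{t-1}$ furnished by \cref{cor:small_vandermonde_inverse_frobenius_norm_upper_bound} gives
\begin{equation*}
\lVert G^{\enc}_\mathcal{T} (G^{\sec}_\mathcal{T})^{-1} \rVert_F^2 \leq mt \cdot \frac{t^2}{4^{t-1} \Pi(t-1)^2} n^{2t-2} = \frac{m t^3}{4^{t-1} \Pi(t-1)^2} n^{2t-2}.
\end{equation*}
Dividing by $m$ and substituting the prescribed value of $\sigma^2$ yields $\mathcal{L}(\bm{s} \to \hat{\bm{s}}_\mathcal{T}) \leq \delta$, uniformly in $\mathcal{T}$.

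The main obstacle is the Vandermonde inverse bound from \cref{cor:small_vandermonde_inverse_frobenius_norm_upper_bound}; matching the constant $\frac{t^3 m}{4^{t-1}\Pi(t-1)^2} n^{2t-2}$ precisely forces the Frobenius bound on $V^{-1}$ to have the form above. The hypothesis of consecutive exponents in $G^{\sec}$ is essential, as it is what lets me peel off a unitary diagonal factor and reduce to a plain $t \times t$ Vandermonde matrix in roots of unity, to which the appendix bound applies directly. Everything else is bookkeeping around the whitening step and the power-constrained capacity lemma.
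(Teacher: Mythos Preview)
Your proof is correct and follows essentially the same route as the paper: whiten by $(G^{\sec}_{\mathcal T})^{-1}$, apply \cref{thm:AWGN_MIMO_power_constrained_capacity} with the power bound $\tr(\Var(\bm s))\le m$, and control $\lVert G^{\enc}_{\mathcal T}(G^{\sec}_{\mathcal T})^{-1}\rVert_F$ via the factorization $G^{\sec}_{\mathcal T}=VD$ together with \cref{cor:small_vandermonde_inverse_frobenius_norm_upper_bound}. The only cosmetic difference is that the paper bounds $\lVert G^{\enc}_{\mathcal T}(G^{\sec}_{\mathcal T})^{-1}\rVert_F\le \lVert G^{\enc}_{\mathcal T}\rVert_F\,\lVert (G^{\sec}_{\mathcal T})^{-1}\rVert_F$ directly, whereas you first absorb $D^{-1}$ into $G^{\enc}_{\mathcal T}$ and then pass through $\lVert V^{-1}\rVert_2\le\lVert V^{-1}\rVert_F$; both yield the identical numerical bound.
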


\begin{proof}
Let $\mathcal{T} = \{i_1, \dots, i_t\} \subset [n]$ be a set of size $t$. The matrix $G^\enc_\mathcal{T}$ is an $m \times t$ generalized Vandermonde matrix with $n$th roots of unity as evaluation points. Therefore, $\lVert G^\enc_\mathcal{T} \rVert_F^2 = tm$. As the exponents in $G^\sec_\mathcal{T} \in \CC^{t \times t}$ are consecutive, say $\ell, \ell + 1, \dots, \ell + t - 1$, we may write
\begin{equation*}
    G^\sec_\mathcal{T} = V \cdot \diag(\alpha_{i_1}^\ell, \dots, \alpha_{i_t}^\ell),
\end{equation*}
where $V$ is a $t \times t$ Vandermonde matrix whose evaluation points are $\alpha_{i_1}, \dots, \alpha_{i_t}$. As $\diag(\alpha_{i_1}^\ell, \dots, \alpha_{i_t}^\ell)$ is unitary, we have that
\begin{equation*}
    \lVert (G^\sec_\mathcal{T})^{-1} \rVert_F^2 = \lVert V^{-1} \rVert_F^2 \leq \left( \frac{t}{2^{t - 1} \Pi(t - 1)} n^{t - 1} \right)^2
\end{equation*}
by \cref{cor:small_vandermonde_inverse_frobenius_norm_upper_bound}.

The adversary obtains a realization of $\hat{\bm{s}}_\mathcal{T} = \bm{s} G^\enc_\mathcal{T} + \bm{r} G^\sec_\mathcal{T}$, where $\bm{r} \sim \CN(0, \sigma^2 \cdot I)$. As this looks like the output of a Gaussian channel, we may use \cref{thm:AWGN_MIMO_power_constrained_capacity} to upper bound the mutual information $I(\bm{s}; \hat{\bm{s}}_\mathcal{T})$. In particular, $\hat{\bm{s}}_\mathcal{T}$ is essentially the output of a AWGN channel defined by $\bm{y} = H\bm{x} + \bm{z}$, where $H^T = G^\enc_\mathcal{T} (G^\sec_\mathcal{T})^{-1}$, $\bm{y}^T = \hat{\bm{s}}_\mathcal{T} (G^\sec_\mathcal{T})^{-1}$, $\bm{x}^T = \bm{s}$, and $\bm{z}^T = \bm{r}$. Let $\bm{s} \in \CC^m$ have an arbitrary distribution such that $\lVert \bm{s} \rVert_\infty \leq 1$. Then
\begin{equation*}
    \Var(\bm{s}_j) = \EE[|\bm{s}_j|^2] - |\EE[\bm{s}_j]|^2 \leq \EE[|\bm{s}_j|^2] \leq \EE[\lVert \bm{s} \rVert^2_\infty] \leq 1,
\end{equation*}
and $\tr(\Var(\bm{s})) = \sum_{j=1}^m \Var(\bm{s}_j) \leq m$. By utilizing \cref{thm:AWGN_MIMO_power_constrained_capacity}, we obtain
\begin{align*}
    \mathcal{L}(\bm{s} \to \hat{\bm{s}}_\mathcal{T}) &= \tfrac{1}{m} I(\bm{s}; \hat{\bm{s}}_\mathcal{T}) \\
    &\leq \tfrac{1}{m} \tfrac{m}{\sigma^2} \lVert G^\enc_\mathcal{T} (G^\sec_\mathcal{T})^{-1} \rVert_F^2 \\
    &\leq \tfrac{1}{\sigma^2} \cdot \lVert G^\enc_\mathcal{T} \rVert_F^2 \cdot \lVert (G^\sec_\mathcal{T})^{-1} \rVert_F^2 \\
    &\leq \frac{1}{\sigma^2} \cdot tm \cdot \left( \frac{t}{2^{t - 1} \Pi(t - 1)} n^{t - 1} \right)^2 \\
    &= \frac{1}{\sigma^2} \cdot \frac{t^3 m}{ 4^{t - 1} \Pi(t - 1)^2} n^{2t - 2} = \delta.
\end{align*}
As this holds for all sets $\mathcal{T} \subset [n]$ of size $t$, we get that the secret sharing scheme is $(t, \delta)$\nobreakdash-secure.
\end{proof}

Instead of secret sharing just the vector $s$, we could share many vectors using independent noise for each of them and give the $i$th share of each secret sharing vector to party $i$. The following lemma shows that the total leakage is upper bounded by the maximum leakage from any of the individual secret shares.

\begin{lemma}\label{lem:leakage_of_two_shares}
If $\bm{s}, \bm{s}'$ are shared independently as $\hat{\bm{s}}, \hat{\bm{s}}'$, then
\begin{equation*}
    \mathcal{L}((\bm{s}, \bm{s}') \to (\hat{\bm{s}}_\mathcal{T}, \hat{\bm{s}}'_\mathcal{T})) \leq \max\{\mathcal{L}(\bm{s} \to \hat{\bm{s}}_\mathcal{T}), \mathcal{L}(\bm{s}' \to \hat{\bm{s}}'_\mathcal{T})\}.
\end{equation*}
\end{lemma}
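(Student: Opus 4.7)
The plan is to reduce the joint leakage to a sum of the individual mutual informations and then apply a weighted-average observation. Denote the lengths of $\bm{s}$ and $\bm{s}'$ by $m$ and $m'$, and write the two encodings as $\hat{\bm{s}} = \bm{s} G^\enc + \bm{r} G^\sec$ and $\hat{\bm{s}}' = \bm{s}' G^\enc + \bm{r}' G^\sec$, where the phrase ``shared independently'' is read as: the auxiliary noise vectors $\bm{r}$ and $\bm{r}'$ are mutually independent and jointly independent of $(\bm{s}, \bm{s}')$.

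The first step, which carries the information-theoretic content, is to establish
\[
    I((\bm{s}, \bm{s}'); (\hat{\bm{s}}_\mathcal{T}, \hat{\bm{s}}'_\mathcal{T})) \leq I(\bm{s}; \hat{\bm{s}}_\mathcal{T}) + I(\bm{s}'; \hat{\bm{s}}'_\mathcal{T}).
\]
I would expand the left-hand side as $h(\hat{\bm{s}}_\mathcal{T}, \hat{\bm{s}}'_\mathcal{T}) - h(\hat{\bm{s}}_\mathcal{T}, \hat{\bm{s}}'_\mathcal{T} \mid \bm{s}, \bm{s}')$, bound the unconditional joint entropy by subadditivity $h(\hat{\bm{s}}_\mathcal{T}, \hat{\bm{s}}'_\mathcal{T}) \leq h(\hat{\bm{s}}_\mathcal{T}) + h(\hat{\bm{s}}'_\mathcal{T})$, and use the independence of $\bm{r}$ and $\bm{r}'$ to argue that $\hat{\bm{s}}_\mathcal{T}$ and $\hat{\bm{s}}'_\mathcal{T}$ are conditionally independent given $(\bm{s}, \bm{s}')$, so that the conditional joint entropy factors as a sum. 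Because $\hat{\bm{s}}_\mathcal{T}$ depends on $(\bm{s}, \bm{s}')$ only through $\bm{s}$, I would further simplify $h(\hat{\bm{s}}_\mathcal{T} \mid \bm{s}, \bm{s}') = h(\hat{\bm{s}}_\mathcal{T} \mid \bm{s})$, and analogously for the primed share. Combining these identities gives the claimed inequality.

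The second step is arithmetic. Using the definition of $\mathcal{L}$ and the bound from the first step,
\[
    \mathcal{L}((\bm{s}, \bm{s}') \to (\hat{\bm{s}}_\mathcal{T}, \hat{\bm{s}}'_\mathcal{T})) = \frac{I((\bm{s}, \bm{s}'); (\hat{\bm{s}}_\mathcal{T}, \hat{\bm{s}}'_\mathcal{T}))}{m + m'} \leq \frac{m\, \mathcal{L}(\bm{s} \to \hat{\bm{s}}_\mathcal{T}) + m'\, \mathcal{L}(\bm{s}' \to \hat{\bm{s}}'_\mathcal{T})}{m + m'},
\]
and the right-hand side is a convex combination of the two individual leakages, hence bounded by their maximum.

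No serious obstacle is expected; the only subtle point is to correctly translate the ``independent sharing'' hypothesis into conditional independence of $\hat{\bm{s}}_\mathcal{T}$ and $\hat{\bm{s}}'_\mathcal{T}$ given the secrets, which is what licenses the splitting of the conditional entropy. A trivial induction then extends the bound to any finite number of independently shared vectors, which is the form actually used when the scheme is applied to multiple matrix blocks in the SDMM constructions of later sections.
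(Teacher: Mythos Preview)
Your proposal is correct and follows essentially the same approach as the paper: both establish the inequality $I(\bm{s},\bm{s}';\hat{\bm{s}}_\mathcal{T},\hat{\bm{s}}'_\mathcal{T}) \leq I(\bm{s};\hat{\bm{s}}_\mathcal{T}) + I(\bm{s}';\hat{\bm{s}}'_\mathcal{T})$ from the conditional independences implied by independent sharing, then bound the resulting (weighted) average by the maximum. Your version is slightly more explicit in spelling out the entropy decomposition and slightly more general in allowing $m \neq m'$, but the argument is the same.
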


\begin{proof}
Clearly, due to creating the secret shares independently, $(\hat{\bm{s}}'_\mathcal{T} \independent \bm{s}) \mid \bm{s}'$ and $(\hat{\bm{s}}_\mathcal{T} \independent \bm{s}') \mid \bm{s}$, where $\cdot \independent \cdot \mid \cdot$ denotes conditional independence. Furthermore, $(\hat{\bm{s}}_\mathcal{T} \independent \hat{\bm{s}}'_\mathcal{T}) \mid (\bm{s}, \bm{s}')$ since the noise terms are chosen independently. Based on these conditional independences, we obtain the inequality
\begin{equation*}
    I(\bm{s}, \bm{s}'; \hat{\bm{s}}_\mathcal{T}, \hat{\bm{s}}'_\mathcal{T}) \leq I(\bm{s}; \hat{\bm{s}}_\mathcal{T}) + I(\bm{s}'; \hat{\bm{s}}'_\mathcal{T}).
\end{equation*}
Therefore, as $(\bm{s}, \bm{s}')$ contains $2m$ symbols,
\begin{align*}
    \mathcal{L}((\bm{s}, \bm{s}') \to (\hat{\bm{s}}_\mathcal{T}, \hat{\bm{s}}'_\mathcal{T})) &= \frac{1}{2m}I(\bm{s}, \bm{s}'; \hat{\bm{s}}_\mathcal{T}, \hat{\bm{s}}'_\mathcal{T}) \\
    &\leq \frac{1}{2}\left( \frac{1}{m} I(\bm{s}; \hat{\bm{s}}_\mathcal{T}) + \frac{1}{m} I(\bm{s}'; \hat{\bm{s}}'_\mathcal{T}) \right) \\
    &\leq \max\{\mathcal{L}(\bm{s} \to \hat{\bm{s}}_\mathcal{T}), \mathcal{L}(\bm{s}' \to \hat{\bm{s}}'_\mathcal{T}) \}. \qedhere
\end{align*}
\end{proof}

\section{Linear Secure Distributed Matrix Multiplication}\label{sec:linear_SDMM}

The goal of secure distributed matrix multiplication (SDMM) is to outsource the computation of a matrix product to $N$ worker nodes in such a way that the contents of the matrices are not revealed to adversaries in the system. In particular, we assume that an adversary has access to the data from some $X$ worker nodes. We say that $X$ is the \emph{security parameter}. Additionally, we assume that some of the worker nodes are \emph{stragglers}, meaning that they do not respond in time. We wish that the computation is not slowed down by the presence of these stragglers, which means that we need to be able to decode without these responses.

The general outline of an SDMM protocol is the following. The user has two matrices $A$ and $B$ whose product they wish to compute. The user secret shares these matrices as the shares $\hat{A}_1, \dots, \hat{A}_N$ and $\hat{B}_1, \dots, \hat{B}_N$ with an $X$-secure (or $(X, \delta)$-secure) secret sharing scheme. The user sends the matrices $\hat{A}_i$ and $\hat{B}_i$ to worker $i$ who responds with $\hat{A}_i \hat{B}_i$. Once the user has received a sufficient number of responses, they will decode the product $AB$. The \emph{recovery threshold} of an SDMM scheme is the smallest number $R$ such that $AB$ can be decoded from \emph{any} $R$ responses.

To utilize coding methods, the input matrices are partitioned to smaller pieces. One way to achieve this is using the \emph{inner partitioning} defined by
\begin{equation*}
    A = \begin{pmatrix}
        A_1 & A_2 & \cdots & A_M
    \end{pmatrix}, \quad
    B = \begin{pmatrix}
        B_1 \\ B_2 \\ \vdots \\ B_M
    \end{pmatrix}, \quad
    AB = \sum_{j=1}^M A_j B_j.
\end{equation*}
Another way is the \emph{outer partitioning} defined by
\begin{align*}
    A &= \begin{pmatrix}
        A_1 \\ A_2 \\ \vdots \\ A_K
    \end{pmatrix}, \quad B = \begin{pmatrix}
        B_1 & B_2 & \cdots & B_L
    \end{pmatrix}, \\
    AB &= \begin{pmatrix}
        A_1B_1 & A_1B_2 & \cdots & A_1B_L \\
        A_2B_1 & A_2B_2 & \cdots & A_2B_L \\
        \vdots & \vdots & \ddots & \vdots \\
        A_KB_1 & A_KB_2 & \cdots & A_KB_L
    \end{pmatrix}.
\end{align*}
There are also more general methods to partition the matrices, see \cite{karpuk2024modular, machado2022root, byrne2023straggler, aliasgari2020private}, but we will focus on the ones mentioned above.

In the linear SDMM framework proposed in \cite{makkonen2024general}, we encode the matrices using two pairs of generator matrices, $(F^\enc, F^\sec)$ and $(G^\enc, G^\sec)$, which define nested coset coding schemes as described in \cref{def:nested_coset_coding_scheme}. Therefore, we compute
\begin{align*}
    (\hat{A}_1, \dots, \hat{A}_N) &= (A_1, \dots, A_P)F^\enc + (R_1, \dots, R_X)F^\sec \\
    (\hat{S}_1, \dots, \hat{S}_N) &= (S_1, \dots, S_P)G^\enc + (S_1, \dots, S_X)G^\sec,
\end{align*}
where $R_1, \dots, R_X$ and $S_1, \dots, S_X$ are random matrices with independent entries chosen from a suitable distribution. Over the finite fields this will be the uniform distribution, but for the fields $\CC$ and $\RR$ this will be a normal distribution with a given variance $\sigma^2$. The workers compute $\hat{A}_i \hat{B}_i$ and the goal is to decode $AB$ from these results.

Instead of describing the generator matrices $(F^\enc, F^\sec)$ and $(G^\enc, G^\sec)$, we will often define (Laurent) polynomials $f$ and $g$ using the matrices as coefficients. The encoded matrices will then be defined as evaluations of these polynomials at evaluation points $\alpha_1, \dots, \alpha_N$. In particular, $\hat{A}_i = f(\alpha_i)$ and $\hat{B}_i = g(\alpha_i)$, which means that the responses are evaluations of the polynomial $h = fg$, i.e., $\hat{A}_i \hat{B}_i = f(\alpha_i) g(\alpha_i) = h(\alpha_i)$. If $f$ and $g$ are properly designed, then $AB$ can be decoded by using polynomial interpolation on the $h(\alpha_i)$. As seen from the below example, this method can equivalently be defined using the generator matrices.

\begin{example}[Construction from \cite{chang2018capacity}]
We set the security parameter as $X = 2$ and the partitioning parameters as $K = L = 2$. We partition the matrices $A$ and $B$ to $K = L$ pieces using the outer partitioning and define the polynomials
\begin{align*}
    f(z) &= A_1 + A_2 z + R_1 z^2 + R_2 z^3, \\
    g(z) &= B_1 + B_2 z^4 + S_1 z^8 + S_2 z^{12}.
\end{align*}
The shares $\hat{A}_i = f(\alpha_i)$ and $\hat{B}_i = g(\alpha_i)$ can be described by $(\hat{A}_1, \dots, \hat{A}_N) = (A_1, A_2)F^\enc + (R_1, R_2)F^\sec$ and $(\hat{B}_1, \dots, \hat{B}_N) = (B_1, B_2)G^\enc + (S_1, S_2)G^\sec$, where the generator matrices are
\begin{align*}
    F^\enc &= \begin{pmatrix}
        1 & 1 & \cdots & 1 \\
        \alpha_1 & \alpha_2 & \cdots & \alpha_N
    \end{pmatrix}, \quad
    F^\sec = \begin{pmatrix}
        \alpha_1^2 & \alpha_2^2 & \cdots & \alpha_N^2 \\
        \alpha_1^3 & \alpha_2^3 & \cdots & \alpha_N^3
    \end{pmatrix}, \\
    G^\enc &= \begin{pmatrix}
        1 & 1 & \cdots & 1 \\
        \alpha_1^4 & \alpha_2^4 & \cdots & \alpha_N^4
    \end{pmatrix}, \quad
    G^\sec = \begin{pmatrix}
        \alpha_1^8 & \alpha_2^8 & \cdots & \alpha_N^8 \\
        \alpha_1^{12} & \alpha_2^{12} & \cdots & \alpha_N^{12}
    \end{pmatrix}.
\end{align*}
The workers compute $\hat{A}_i \hat{B}_i = f(\alpha_i)g(\alpha_i) = h(\alpha_i)$, where $h = fg$. As the coefficients of $f$ and $g$ encode information about the matrices $A$ and $B$, the coefficients of $h$ encode information about the product $AB$. In particular,
\begin{align*}
    h(z) &= A_1 B_1 + A_2 B_1 z + R_1 B_1 z^2 + R_2 B_1 z^3 \\
    &+ A_1 B_2 z^4 + A_2 B_2 z^5 + R_1 B_2 z^6 + R_2 B_2 z^7 \\
    &+ A_1 S_1 z^8 + A_2 S_1 z^9 + R_1 S_1 z^{10} + R_2 S_1 z^{11} \\
    &+ A_1 S_2 z^{12} + A_2 S_2 z^{13} + R_1 S_2 z^{14} + R_2 S_2 z^{15}.
\end{align*}
The coefficients of $1, z, z^4, z^5$ encode the matrices that we need to construct the product $AB$. The coefficients can be obtained by polynomial interpolation given a sufficient number of evaluations. In this case, 16 evaluations are sufficient, since the polynomial has degree 15. We say that the above scheme has recovery threshold $R = 16$ and it can tolerate $S$ stragglers if we set $N = R + S$. This construction generalizes easily to a recovery threshold of $R = (K + X)(L + X)$. 
\end{example}

We say that a linear SDMM scheme is $(X, \delta)$-secure if both $A$ and $B$ are secret shared using $(X, \delta)$-secure secret sharing schemes, i.e., if
\begin{align*}
    &\mathcal{L}(\bm{A} \to \hat{\bm{A}}_\mathcal{X}) = \frac{1}{ts}I(\bm{A}; \hat{\bm{A}}_\mathcal{X}) \leq \delta, \quad \text{and} \\
    &\mathcal{L}(\bm{B} \to \hat{\bm{B}}_\mathcal{X}) = \frac{1}{sr} I(\bm{B}; \hat{\bm{B}}_\mathcal{X}) \leq \delta
\end{align*}
for all distributions of $\bm{A}, \bm{B}$ such that $\lVert \bm{A} \rVert_\text{max}, \lVert \bm{B} \rVert_\text{max} \leq 1$. This is essentially a restatement of the definition of $(t, \delta)$-security of secret sharing schemes.

\begin{lemma}\label{lem:SDMM_scheme_is_secure_from_secret_sharing}
A linear SDMM scheme defined by nested coset coding schemes $(F^\enc, F^\sec)$ and $(G^\enc, G^\sec)$ is $(X, \delta)$-secure if the secret sharing schemes defined by $(F^\enc, F^\sec)$ and $(G^\enc, G^\sec)$ are both $(X, \delta)$-secure.
\end{lemma}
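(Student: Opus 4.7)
The plan is to reduce the matrix-valued SDMM encoding to a collection of independent vector secret sharings and then invoke \cref{lem:leakage_of_two_shares}. First I would unfold the encoding of $\bm{A}$ into per-position sharings: for each entry position $(i,j)$ inside the grid of sub-blocks, let $\bm{a}^{(i,j)} \in \CC^P$ be the length-$P$ vector whose $p$-th coordinate is $(\bm{A}_p)_{i,j}$, and note that the SDMM encoding restricted to position $(i,j)$ coincides with $\hat{\bm{a}}^{(i,j)} = \bm{a}^{(i,j)} F^\enc + \bm{r}^{(i,j)} F^\sec$, where $\bm{r}^{(i,j)}$ is the length-$X$ vector built from the $(i,j)$-th entries of $\bm{R}_1, \dots, \bm{R}_X$. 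Because the entries of $\bm{R}_1, \dots, \bm{R}_X$ are drawn independently, the noise vectors $\bm{r}^{(i,j)}$ are independent across positions, so each $\hat{\bm{a}}^{(i,j)}$ is an independent invocation of the secret sharing scheme defined by $(F^\enc, F^\sec)$.

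Next I would transfer the norm hypothesis. The assumption $\lVert \bm{A} \rVert_{\max} \leq 1$ immediately implies $\lVert \bm{a}^{(i,j)} \rVert_\infty \leq 1$ for every position, so the assumed $(X, \delta)$-security of the secret sharing scheme $(F^\enc, F^\sec)$ directly yields $\mathcal{L}(\bm{a}^{(i,j)} \to \hat{\bm{a}}^{(i,j)}_\mathcal{X}) \leq \delta$ for each $(i,j)$.

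Finally I would extend \cref{lem:leakage_of_two_shares} by a straightforward induction to an arbitrary number of independently shared secrets, which shows that the average leakage of the joint sharing is bounded by the maximum of the individual leakages. Since $\hat{\bm{A}}_\mathcal{X}$ is exactly the joint collection of all the $\hat{\bm{a}}^{(i,j)}_\mathcal{X}$, and the per-symbol normalization used in the definition of $\mathcal{L}(\bm{A} \to \hat{\bm{A}}_\mathcal{X}) = \tfrac{1}{ts} I(\bm{A}; \hat{\bm{A}}_\mathcal{X})$ matches the total scalar count across these independent sharings, one obtains $\mathcal{L}(\bm{A} \to \hat{\bm{A}}_\mathcal{X}) \leq \delta$. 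The identical argument with $(G^\enc, G^\sec)$ in place of $(F^\enc, F^\sec)$ handles $\bm{B}$. The only real obstacle is bookkeeping: making the unfolding into independent per-position sharings precise and checking that the $\infty$-norm constraint and the normalization by the number of symbols both lift cleanly. No new inequality beyond the iterated form of \cref{lem:leakage_of_two_shares} is needed.
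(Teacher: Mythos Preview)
Your proposal is correct and follows essentially the same approach as the paper: the paper also unpacks the encoding entrywise into independent vector secret sharings indexed by positions $(\mu,\nu)$, invokes the iterated form of \cref{lem:leakage_of_two_shares} to bound the total leakage by the maximum of the per-position leakages, and then applies the assumed $(X,\delta)$-security of $(F^\enc,F^\sec)$ and $(G^\enc,G^\sec)$. Your explicit check that $\lVert \bm{A} \rVert_{\max} \leq 1$ implies $\lVert \bm{a}^{(i,j)} \rVert_\infty \leq 1$ and that the symbol-count normalizations match is a nice addition that the paper leaves implicit.
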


\begin{proof}
Let $(\mu, \nu)$ denote an index to the matrices $\hat{A}_i$. Then, we can describe the encoding process as
\begin{align*}
    (\hat{A}^{\mu, \nu}_1, \hat{A}^{\mu, \nu}_2, \dots, \hat{A}^{\mu, \nu}_N) &= (A_1^{\mu, \nu}, A_2^{\mu, \nu}, \dots, A_P^{\mu, \nu})F^\enc \\
    &+ (R_1^{\mu, \nu}, R_2^{\mu, \nu}, \dots, R_X^{\mu, \nu})F^\sec.
\end{align*}
Therefore, the coordinates $A^{\mu, \nu} = (A_1^{\mu, \nu}, A_2^{\mu, \nu}, \dots, A_P^{\mu, \nu})$ are secret shared using the nested coset coding scheme $(F^\enc, F^\sec)$. As $(\mu, \nu)$ ranges over all possible indices, we will go through all entries in the matrix $A$. By \cref{lem:leakage_of_two_shares} we can bound the total leakage as
\begin{equation*}
    \mathcal{L}(\bm{A} \to \hat{\bm{A}}_\mathcal{X}) \leq \max_{\mu, \nu} \mathcal{L}(\bm{A}^{\mu, \nu} \to \hat{\bm{A}}^{\mu, \nu}_\mathcal{X}) \leq \delta.
\end{equation*}
The last inequality follows from the fact that $(F^\enc, F^\sec)$ is $(X, \delta)$-secure. A similar argument shows that $B$ is also kept secure, since $(G^\enc, G^\sec)$ is $(X, \delta)$-secure.
\end{proof}

\begin{remark}
We could define different security levels $(X_A, \delta_A)$ and $(X_B, \delta_B)$ for the matrices $A$ and $B$ if we wished, as the secret sharing of $A$ and $B$ is done independently. However, in this paper, we focus of the setting where the security parameters for $A$ and $B$ are the same.
\end{remark}

\section{SDMM over Complex Numbers}\label{sec:SDMM_over_complex_numbers}

In this section, we discuss secure distributed matrix multiplication over the complex numbers. These schemes were presented in an earlier version of this paper \cite{makkonen2022analog}. As discussed in \cref{sec:preliminaries}, polynomial interpolation is numerically stable if the evaluation points are chosen evenly on the unit circle, which is only possible when working over the complex numbers.

\subsection{Inner Partitioning}\label{sec:complex_inner_product}

The MatDot scheme and the DFT scheme for secure distributed matrix multiplication have been introduced in \cite{aliasgari2020private} and \cite{mital2022secure}, respectively, for the finite field case. We shall follow essentially the same idea to extend these constructions to the complex numbers. The matrices $A$ and $B$ are partitioned to $M$ pieces using the inner partitioning. We define the Laurent polynomials
\begin{align*}
    f(z) &= \sum_{j=1}^M A_j z^{j-1} + \sum_{j=1}^X R_j z^{M + j - 1}, \\
    g(z) &= \sum_{j=1}^M B_j z^{-(j - 1)} + \sum_{j=1}^X S_j z^j.
\end{align*}
The matrices $R_j, S_j$ for $j = 1, \dots, X$ are of appropriate size with independent entries chosen from the distribution $\CN(0, \sigma^2)$. Let $h = fg$ so that
\begin{align*}
    h(z) &= \sum_{j=1}^M \sum_{j'=1}^M A_j B_{j'} z^{j - j'} + \sum_{j=1}^M \sum_{j'=1}^X A_j S_{j'} z^{j + j' - 1} \\
    &+ \sum_{j=1}^X \sum_{j'=1}^M R_j B_{j'} z^{M + j - j'} + \sum_{j=1}^X \sum_{j'=1}^X R_j S_{j'} z^{M + j + j' - 1}.
\end{align*}
The exponents of the terms range from $-(M - 1)$ to $M + 2X - 1$, giving $2M + 2X - 1$ terms in total. Furthermore, the constant coefficient of $h$ is exactly the product $AB$.

Let $\alpha_1, \dots, \alpha_N$ be the $N$th roots of unity in $\CC$, i.e., $\alpha_i = \omega_N^i$. The workers will receive the evaluations $f(\alpha_i)$ and $g(\alpha_i)$ and will compute $h(\alpha_i) = f(\alpha_i)g(\alpha_i)$. We now present two schemes based on this encoding.

\paragraph{Complex MatDot scheme} For the MatDot scheme we set $N = 2M + 2X - 1 + S$, where $S \geq 0$ is the number of stragglers. Thus, we can interpolate the polynomial $h$ from any $R = N - S = 2M + 2X - 1$ evaluations and recover the constant coefficient $AB$. We are able to tolerate any $S$ straggling workers.

\paragraph{Complex DFT scheme} For the DFT scheme we set $N = M + 2X$. The evaluation points are $N$th roots of unity, so they have the property that
\begin{equation*}
    \sum_{i=1}^N \alpha_i^\ell = \begin{dcases*}
        N & if $N \mid \ell$ \\
        0 & otherwise
    \end{dcases*}.
\end{equation*}
The only term $h$ whose exponent is divisible by $N = M + 2X$ is the constant term, which means that $AB = \tfrac{1}{N} \sum_{i=1}^N h(\alpha_i)$. For this scheme we can not tolerate any stragglers as we need all of the responses for decoding.

\subsection{Outer Partitioning}\label{sec:complex_outer_product}

There have been many constructions for SDMM schemes using the outer partitioning, including \cite{chang2018capacity, d2020gasp, kakar2019capacity}. In this section we will introduce two simple schemes. Both use the outer partitioning where $A$ is partitioned to $K$ pieces and $B$ is partitioned to $L$ pieces. We will define two polynomials $f$ and $g$ to encode $A$ and $B$ and set $h = fg$. Let $\alpha_1, \dots, \alpha_N$ be $N$th roots of unity, i.e., $\alpha_i = \omega_N^i$. The workers receive $f(\alpha_i)$ and $g(\alpha_i)$ and compute $h(\alpha_i) = f(\alpha_i)g(\alpha_i)$.

\paragraph{Complex GASP scheme} In \cite{d2020gasp}, the following encoding was introduced as a special case of the more general GASP construction. We define the polynomials
\begin{align*}
    f(z) &= \sum_{j=1}^X R_j z^{j-1} + \sum_{j=1}^K A_j z^{K(L - 1) + X + j - 1}, \\
    g(z) &= \sum_{j=1}^X S_j z^{j-1} + \sum_{j=1}^L B_j z^{K + X - 1 + K(j - 1)}.
\end{align*}
where $R_j, S_j$ for $j = 1, \dots, X$ are of appropriate size with independent entries chosen from the distribution $\CN(0, \sigma^2)$. Let $h = fg$ so that
\begin{align*}
    h(z) &= (\text{terms with degree $\leq KL + 2X - 2$}) \\
    &+ \sum_{j=1}^K \sum_{j'=1}^L A_j B_{j'} z^{KL + 2X - 1 + j - 1 + K(j' - 1)}.
\end{align*}
The degree of $h$ is $2KL + 2X - 2$, so we need $R = 2KL + 2X - 1$ evaluations to interpolate the coefficients. The submatrices $A_j B_{j'}$ appear as the coefficients of $h$.

\paragraph{Complex A3S scheme} The following construction, known as the aligned secret sharing scheme (A3S), was introduced in \cite{kakar2019capacity}. We define the polynomials
\begin{align*}
    f(z) &= \sum_{j=1}^K A_j z^{j-1} + \sum_{j=1}^X R_j z^{K + j - 1}, \\
    g(z) &= \sum_{j=1}^L B_j z^{(K + X)(j - 1)} + \sum_{j=1}^X S_j z^{(K + X)(L - 1) + K + j - 1},
\end{align*}
where $R_j, S_j$ for $j = 1, \dots, X$ are of appropriate size with independent entries chosen from the distribution $\CN(0, \sigma^2)$. Let $h = fg$ so that
\begin{align*}
    h(z) &= \sum_{j=1}^K \sum_{j'=1}^L A_j B_{j'} z^{j - 1 + (K + X)(j' - 1)} \\
    &+ \sum_{j=1}^X \sum_{j'=1}^L R_j B_{j'} z^{K + j - 1 + (K + X)(j' - 1)} \\
    &+ (\text{terms with degree $\geq (K + X)(L - 1) + K$}).
\end{align*}
The degree of $h$ is $(K + X)(L + 1) - 2$, so we need $R = (K + X)(L + 1) - 1$ evaluations to interpolate the coefficients. The submatrices $A_j B_{j'}$ appear as the coefficients of $h$.

\subsection{Security Analysis}

From the descriptions of the schemes above, we see that the matrices $A$ and $B$ are secret shared to the workers using a Shamir secret sharing scheme over the complex numbers. In particular, the generator matrices $G^\enc$ and $G^\sec$ are generalized Vandermonde matrices with $N$th roots of unity as evaluation points. Furthermore, the exponents in $G^\sec$ are consecutive as can be seen from the polynomials $f$ and $g$ of all the proposed schemes. Therefore, we may utilize \cref{lem:SDMM_scheme_is_secure_from_secret_sharing} and \cref{thm:secret_sharing_complex_numbers}. This means that we need to choose the noise variance as
\begin{equation}\label{eq:required_noise_variance_complex}
    \sigma^2 = \frac{1}{\delta} \cdot \frac{P X^3}{4^{X - 1} \Pi(X - 1)^2} N^{2X - 2},
\end{equation}
where $\delta > 0$ is the desired level of leakage (measured in nats), $P$ is the number of partitions in the matrix ($M$, $K$ or $L$ depending on the scheme), and $N$ is the number of workers in total.

\section{SDMM over Real Numbers}\label{sec:SDMM_over_real_numbers}

In this section, we discuss how to perform SDMM when the matrices are defined over the real numbers instead of the complex numbers. Let $A \in \RR^{t \times s}$ and $B \in \RR^{s \times r}$ be matrices we want to multiply. The naive method would be to just consider the real valued matrices as a complex valued matrices and perform the methods from the previous section. However, it seems wasteful to perform complex arithmetic just to compute a product of real matrices. In this section, we propose more efficient methods to do SDMM over the real numbers.

\subsection{Encoding Real Matrices as Complex Matrices}

Let $A, B$ be real valued matrices. We will present two methods of computing the product $AB$ by encoding the matrices as complex valued matrices and computing their product. These methods were presented in \cite{ramamoorthy2022numericallya} in slightly different language. In particular, we will encode a real matrices $A, B$ into complex matrices of half the size such that the product $AB$ can be computed from these complex matrices in a natural way. We call this process \emph{complexification}\footnote{Recall that $\RR^{t \times s} \otimes_\RR \CC = \CC^{t \times s}$ is the \emph{complexification} of the real vector space $\RR^{t \times s}$. The use of the name complexification is natural in our context, since we map a real matrix to the complexification of the matrix space of half the size of the original matrix.}.

\begin{definition}[Inner complexification]\label{def:inner_complexification}

The \emph{inner complexification} of the pair $(A, B)$ of matrices with real entries is $(A', B')$, where $A' = A_1 + \imag A_2$ and $B' = B_1 - \imag B_2$, and
\begin{equation*}
    A = \begin{pmatrix}
        A_1 & A_2
    \end{pmatrix}, \quad
    B = \begin{pmatrix}
        B_1 \\ B_2
    \end{pmatrix}.
\end{equation*}
Then, $AB = A_1B_1 + A_2B_2 = \Re(A'B')$. If $A \in \RR^{t \times s}$ and $B \in \RR^{s \times r}$, then $A' \in \CC^{t \times \sfrac{s}{2}}$ and $B' \in \CC^{\sfrac{s}{2} \times r}$.
\end{definition}

\begin{definition}[Outer complexification]\label{def:outer_complexification}

The \emph{outer complexification} of the pair $(A, B)$ of matrices with real entries is $(A', B')$, where $A' = A_1 + \imag A_2$ and $B' = B_1 + \imag B_2$, and
\begin{equation*}
    A = \begin{pmatrix}
        A_1 \\ A_2
    \end{pmatrix}, \quad
    B = \begin{pmatrix}
        B_1 & B_2
    \end{pmatrix}.
\end{equation*}
Then, the product $AB$ can be written as
\begin{equation}\label{eq:outer_complexification_assembly}
    AB =
    \tfrac{1}{2} \begin{pmatrix}
        \Re(A' B') + \Re(A' \overline{B'}) & \Im(A' B') - \Im(A' \overline{B'}) \\
        \Im(A' B') + \Im(A' \overline{B'}) & -\Re(A' B') + \Re(A' \overline{B'})
    \end{pmatrix},
\end{equation}
i.e., $AB$ can be computed from $A'B'$ and $A'\overline{B'}$. If $A \in \RR^{t \times s}$ and $B \in \RR^{s \times r}$, then $A' \in \CC^{\sfrac{t}{2} \times s}$ and $B' \in \CC^{s \times \sfrac{r}{2}}$.
\end{definition}

The authors of \cite{ramamoorthy2022numericallya} considered a similar concept, which they call the rotation matrix embedding, but their language and notation significantly differs from the above. In both the inner and outer complexification methods, the matrix $A'$ is a complex matrix of half the size of $A$, so it takes the same number of real parameters to describe. The same holds for $B$ and $B'$. To compute the matrix product $AB$ we need either $\Re(A'B')$ or $(A'B', A' \overline{B'})$ depending on the method. Below we will design SDMM schemes specifically to compute these products efficiently.

\subsection{Inner Partitioning}

Let $(A', B')$ be the inner complexification of $(A, B)$, i.e., $AB = \Re(A'B')$. In this section, we propose a method to retrieve just the real part of $A' B'$ instead of the whole complex matrix. This comes at the cost of increasing the number of workers slightly.

We will encode $A' \in \CC^{t \times \sfrac{s}{2}}$ and $B' \in \CC^{\sfrac{s}{2} \times r}$ using the method presented in \cref{sec:complex_inner_product}, i.e., by partitioning them using the inner partitioning to $M$ pieces and by creating the Laurent polynomials
\begin{equation}\label{eq:real_inner_polynomials}
\begin{split}
    f(z) &= \sum_{j=1}^M A'_j z^{j-1} + \sum_{j=1}^X R'_j z^{M + j - 1}, \\
    g(z) &= \sum_{j=1}^M B'_j z^{-(j - 1)} + \sum_{j=1}^X S'_j z^j.
\end{split}
\end{equation}
The matrices $R'_j$ and $S'_j$ are of appropriate size with independent entries chosen from the distribution $\CN(0, \sigma^2)$. The constant term of $fg$ is equal to $A'B'$. As before, the workers receive evaluations of these polynomials at the evaluation points $\alpha_i = \omega_N^i$ for $i \in [N]$, i.e., $\hat{A}_i = f(\alpha_i) \in \CC^{t \times \sfrac{s}{2M}}$ and $\hat{B}_i = g(\alpha_i) \in \CC^{\sfrac{s}{2M} \times r}$. However, instead of computing $\hat{A}_i \hat{B}_i$, the workers compute $\Re(\hat{A}_i \hat{B}_i) \in \RR^{t \times r}$. By noticing that $\overline{\alpha_i} = \alpha_i^{-1}$ for all $i \in [N]$, we can write the above as
\begin{align*}
    \Re(f(\alpha_i)g(\alpha_i)) &= \tfrac{1}{2}(f(\alpha_i)g(\alpha_i) + \overline{f(\alpha_i)g(\alpha_i)}) \\
    &= \tfrac{1}{2}(f(\alpha_i)g(\alpha_i) + \overline{f}(\alpha_i^{-1})\overline{g}(\alpha_i^{-1})).
\end{align*}
Therefore, the responses are evaluations of the Laurent polynomial
\begin{equation*}\label{eq:real_inner_h}
    h(z) = \tfrac{1}{2}(f(z)g(z) + \overline{f}(z^{-1})\overline{g}(z^{-1})).
\end{equation*}
The constant term of $h$ is $\frac{1}{2}(A' B' + \overline{A' B'}) = \Re(A'B') = AB$. This polynomial has terms with exponents from $-(M + 2X - 1)$ to $M + 2X - 1$. We have described the worker computation as computing the real part of a complex product, but we could also describe it as the product of real matrices, since
\begin{equation}\label{eq:inner_complexification_worker_computation}
     \Re(\hat{A}_i \hat{B}_i)
     = \underbrace{\begin{pmatrix}
         \Re(\hat{A}_i) & \Im(\hat{A}_i)
     \end{pmatrix}}_{\in \RR^{t \times \sfrac{s}{M}}} \cdot \underbrace{\begin{pmatrix}
         \Re(\hat{B}_i) \\ -\Im(\hat{B}_i)
     \end{pmatrix}}_{\in \RR^{\sfrac{s}{M} \times r}}.
\end{equation}
This encoding method can be used to design the following two SDMM schemes over the real numbers.

\paragraph{Real MatDot scheme} The polynomial $h$ has at most $R = 2M + 4X - 1$ terms, so it can be interpolated from any $R$ evaluations. Thus, we can have $N = R + S$ workers and tolerate $S$ straggling workers. The interpolation is numerically stable since the evaluation points are $N$th roots of unity. Notice that the recovery threshold of $R = 2M + 4X - 1$ is higher than the analogous recovery threshold of $2M + 2X - 1$ for the complex numbers.

\paragraph{Real DFT scheme} Let $N = M + 2X$. As all the exponents in $h$ are between $-(N - 1)$ and $N - 1$, the only term whose degree is divisible by $N$ is the constant term. Therefore, by the properties of roots of units,
\begin{equation*}
    \frac{1}{N} \sum_{i=1}^N h(\alpha_i) = \Re(A'B') = AB.
\end{equation*}

The SDMM procedures over the real numbers using the inner partitioning are written in \cref{alg:real_inner_scheme}.

\begin{algorithm}[t]
    \caption{The real inner partitioning scheme}
    \label{alg:real_inner_scheme}
    \begin{algorithmic}[1]
        \Procedure{Real Matrix Product}{$A, B$}
            \LComment{$N$ workers, partitioning $M$, security $X$, recovery threshold $R$, noise variance $\sigma^2$}
            \State $(A', B') \gets$ \Call{Inner Complexification}{$A, B$}
            \State $\{A'_j, B'_j\}_{j \in [M]} \gets$ \Call{Inner Partition}{$A', B', M$}
            \State $\{R'_j, S'_j\}_{j \in [X]} \gets$ random entries from $\CN(0, \sigma^2)$
            \For{worker $i \in [N]$}
            \State $\alpha_i \gets \exp(2\imag \pi i / N)$
            \State $\hat{A}'_i, \hat{B}'_i \gets f(\alpha_i), g(\alpha_i)$ \Comment{$f, g$ defined in \eqref{eq:real_inner_polynomials}}
            \State $h(\alpha_i) \gets \Re(\hat{A}'_i \hat{B}'_i)$ \Comment{$h$ defined in \eqref{eq:real_inner_h}}
            \EndFor
            \State Compute the constant coefficient $h_0$ by interpolation (MatDot) or averaging (DFT)
            \State $AB \gets h_0$
            \State \textbf{return} $AB$
        \EndProcedure
    \end{algorithmic}
\end{algorithm}

\subsection{Outer Partitioning}

Let $(A', B')$ be the outer complexification of $(A, B)$, i.e., $AB$ can be computed from $A'B'$ and $A' \overline{B'}$. In this section, we propose methods to encode $A'$ and $B'$ such that both $A' B'$ and $A' \overline{B'}$ can be decoded from the responses. This will again come at a small cost of an increased number of workers.

We will partition the matrices $A'$ and $B'$ to $K$ and $L$ pieces using the outer partitioning. We define the polynomials $f$ and $g$ to encode $A'$ and $B'$. The workers receive $\hat{A}'_i = f(\alpha_i) \in \CC^{\sfrac{t}{2K} \times s}$ and $\hat{B}'_i = g(\alpha_i) \in \CC^{s \times \sfrac{r}{2L}}$, where $\alpha_i$ is an $N$th root of unity. The workers compute
\begin{equation*}
    \hat{A}'_i \hat{B}'_i = f(\alpha_i)g(\alpha_i)~ \text{and} ~
    \hat{A}'_i \overline{\hat{B}'_i} = f(\alpha_i)\overline{g(\alpha_i)} = f(\alpha_i)\overline{g}(\alpha_i^{-1})
\end{equation*}
as $\overline{\alpha_i} = \alpha_i^{-1}$ for all evaluation points. Therefore, the workers compute evaluations of the Laurent polynomials $h^{(+)}(z) = f(z)g(z)$ and $h^{(-)}(z) = f(z) \overline{g}(z^{-1})$. Even though we have described the computation as two complex matrix multiplications, we could have described it as the product of real matrices, since
\begin{equation}\label{eq:outer_complexification_worker_computation}
    \underbrace{\begin{pmatrix}
        \Re(\hat{A}'_i) \\
        \Im(\hat{A}'_i)
    \end{pmatrix}}_{\in \RR^{\sfrac{t}{K} \times s}} \underbrace{\begin{pmatrix}
        \Re(\hat{B}'_i) & \Im(\hat{B}'_i)
    \end{pmatrix}}_{\in \RR^{s \times \sfrac{r}{L}}} = \begin{pmatrix}
        \Re(\hat{A}'_i) \Re(\hat{B}'_i) & \Re(\hat{A}'_i)\Im(\hat{B}'_i) \\
        \Im(\hat{A}'_i) \Re(\hat{B}'_i) & \Im(\hat{A}'_i)\Im(\hat{B}'_i)
    \end{pmatrix}
\end{equation}
and
\begin{align*}
    \hat{A}'_i \hat{B}'_i &= \left( \Re(\hat{A}'_i) \Re(\hat{B}'_i) - \Im(\hat{A}'_i)\Im(\hat{B}'_i) \right)
    + \imag \left( \Re(\hat{A}'_i)\Im(\hat{B}'_i) + \Im(\hat{A}'_i) \Re(\hat{B}'_i) \right) \\
    \hat{A}'_i \overline{\hat{B}'_i} &= \left( \Re(\hat{A}'_i) \Re(\hat{B}'_i) + \Im(\hat{A}'_i)\Im(\hat{B}'_i) \right)
    + \imag \left( -\Re(\hat{A}'_i)\Im(\hat{B}'_i) + \Im(\hat{A}'_i) \Re(\hat{B}'_i) \right).
\end{align*}
We present the following two SDMM schemes over the real numbers.

\paragraph{Real GASP scheme} Encode $A'$ and $B'$ as the polynomials
\begin{equation}\label{eq:real_gasp_polynomials}
\begin{split}
    f(z) &= \sum_{j=1}^X R'_j z^{j - 1} + \sum_{j=1}^K A'_j z^{KL + 2X - 1 + j - 1}, \\
    g(z) &= \sum_{j=1}^X S'_j z^{j - 1} + \sum_{j=1}^L B'_j z^{K + X - 1 + K(j - 1)}.
\end{split}
\end{equation}
The matrices $R'_j$ and $S'_j$ are of appropriate size with independent entries chosen from the distribution $\CN(0, \sigma^2)$. Notice that the random matrices are encoded in the terms with low degree. We can write
\begin{equation}\label{eq:real_gasp_h_plus}
\begin{split}
    h^{(+)}(z) &= (\text{terms with degree $\leq KL + K + 3X - 3$}) \\
    &+ \sum_{j=1}^K \sum_{j'=1}^L A'_j B'_{j'} z^{KL + K + 3X - 2 + j - 1 + K(j' - 1)}.
\end{split}
\end{equation}
The submatrices $A'_j B'_{j'}$ appear as the coefficients of $h^{(+)}$ and the degree of $h^{(+)}$ is $2KL + K + 3X - 3$. On the other hand,
\begin{equation}\label{eq:real_gasp_h_minus}
\begin{split}
    h^{(-)}(z) &= (\text{terms with degree $\leq X - 1$ or $\geq KL + X$}) \\
    &+ \sum_{j=1}^K \sum_{j'=1}^L A'_j \overline{B'_{j'}} z^{K(L - 1) + X + j - 1 - K(j' - 1)}.
\end{split}
\end{equation}
Again, the submatrices $A'_j \overline{B'_j}$ appear as the coefficients of $h^{(-)}$. The terms in $h^{(-)}$ have degrees ranging from $-(KL + X - 1)$ to $KL + K + 2X - 2$ for a total of $2KL + K + 3X - 2$ terms. Therefore, we can interpolate $h^{(+)}$ and $h^{(-)}$ from any $R = 2KL + K + 3X - 2$ evaluations. This allows us to compute $A' B'$ and $A' \overline{B}'$.

\paragraph{Real A3S scheme} Encode $A'$ and $B'$ as the polynomials
\begin{equation}\label{eq:real_a3s_polynomials}
\begin{split}
    f(z) &= \sum_{j=1}^K A'_j z^{j - 1} + \sum_{j=1}^X R'_j z^{K + j - 1}, \\
    g(z) &= \sum_{j=1}^L B'_j z^{(K + X)(j - 1)} + \sum_{j=1}^X S'_j z^{(K + X)L + j - 1}.
\end{split}
\end{equation}
The matrices $R'_j$ and $S'_j$ are of appropriate size with independent entries chosen from the distribution $\CN(0, \sigma^2)$. The polynomial $h^{(+)}$ can be written as
\begin{equation}\label{eq:real_a3s_h_plus}
\begin{split}
    h^{(+)}(z) &= \sum_{j=1}^K \sum_{j'=1}^L A'_j B'_{j'} z^{j - 1 + (K + X)(j' - 1)} \\
    &+ \sum_{j=1}^X \sum_{j'=1}^L R'_j B'_{j'} z^{K + j - 1 + (K + X)(j' - 1)} \\
    &+ (\text{terms of degree $\geq (K + X)L$}).
\end{split}
\end{equation}
The submatrices $A'_j B'_{j'}$ appear as coefficients of $h^{(+)}$. The degree of $h^{(+)}$ is $(K + X)(L + 1) + X - 2$, so $(K + X)(L + 1) + X - 1$ evaluations are needed to interpolate the polynomial. On the other hand,
\begin{equation}\label{eq:real_a3s_h_minus}
\begin{split}
    h^{(-)}(z) &= \sum_{j=1}^K \sum_{j'=1}^L A'_j \overline{B'_{j'}} z^{j - 1 - (K + X)(j' - 1)} \\
    &+ \sum_{j=1}^X \sum_{j'=1}^L R'_j \overline{B'_{j'}} z^{K + j - 1 - (K + X)(j' - 1)} \\
    &+ (\text{terms with degree $\leq -(K + X)(L - 1) - 1$}).
\end{split}
\end{equation}
The submatrices $A'_j \overline{B'_{j'}}$ appear as coefficients of $h^{(-)}$. The terms in $h^{(-)}$ range from $-(K + X)L - X + 1$ to $K + X - 1$ for a total of $(K + X)(L + 1) + X - 1$ terms. Therefore, we need $R = (K + X)(L + 1) + X - 1$ evaluations of both polynomials to be able to decode all the required matrix products.

The SDMM procedures over the real numbers using the outer partitioning are written in \cref{alg:real_outer_scheme}.

\begin{algorithm}[t]
    \caption{The real outer partitioning scheme}
    \label{alg:real_outer_scheme}
    \begin{algorithmic}[1]
        \Procedure{Real Matrix Product}{$A, B$}
            \LComment{$N$ workers, partitioning $K, L$, security $X$, recovery threshold $R$, noise variance $\sigma^2$}
            \State $(A', B') \gets$ \Call{Outer Complexification}{$A, B$}
            \State $\{A'_j, B'_{j'}\}_{\substack{j \in [K]\\ j' \in [L]}} \gets$ \Call{Outer Partition}{$A', B', K, L$}
            \State $\{R'_j, S'_j\}_{j \in [X]} \gets$ random entries from $\CN(0, \sigma^2)$
            \For{worker $i \in [N]$}
            \State $\alpha_i \gets \exp(2\imag \pi i / N)$
            \State $\hat{A}'_i, \hat{B}'_i \gets f(\alpha_i), g(\alpha_i)$ \Comment{$f, g$ defined in \eqref{eq:real_gasp_polynomials} or \eqref{eq:real_a3s_polynomials}}
            \State $h^{(+)}(\alpha_i), h^{(-)}(\alpha_i) \gets \hat{A}'_i \hat{B}'_i,  \hat{A}'_i \overline{\hat{B}'_i}$ \Comment{$h^{(+)}, h^{(-)}$ defined in \eqref{eq:real_gasp_h_plus}, \eqref{eq:real_gasp_h_minus} or \eqref{eq:real_a3s_h_plus}, \eqref{eq:real_a3s_h_minus}}
            \EndFor
            \State Interpolate $h^{(+)}$ and $h^{(-)}$ from any $R$ results
            \State Assemble $A' B'$ and $A' \overline{B}'$ from the coefficients of $h^{(+)}$ and $h^{(-)}$
            \State Assemble $AB$ from $A' B'$ and $A' \overline{B'}$ according to \eqref{eq:outer_complexification_assembly}
            \State \textbf{return} $AB$
        \EndProcedure
    \end{algorithmic}
\end{algorithm}

\subsection{Security Analysis}

We assume that the entries in $A$ and $B$ have absolute value at most 1. Then, the modulus of the elements in $A'$ and $B'$ is at most $\sqrt{2}$. The encoding process in the proposed schemes is a a Shamir secret sharing scheme over $\CC$ with generalized Vandermonde matrices $(F^\enc, F^\sec)$ and $(G^\enc, G^\sec)$, where $F^\sec$ and $G^\sec$ have consecutive exponents. Therefore, the noise variance can be set at
\begin{equation*}
    \sigma^2 = \frac{1}{\delta} \cdot \frac{2 P X^3}{4^{X - 1} \Pi(X - 1)^2} N^{2X - 2}
\end{equation*}
according to \cref{lem:SDMM_scheme_is_secure_from_secret_sharing} and \cref{thm:secret_sharing_complex_numbers}, where $\delta > 0$ is the desired level of leakage (measured in nats), $P$ is the number of partitions in the matrix ($M, K$ or $L$ depending on the scheme), and $N$ is the number of workers in total. The factor $2$ is set to compensate for the increased modulus.

\subsection{Complexity Analysis}

In this section, we will compute the computational complexity of the proposed SDMM schemes over the real numbers utilizing the inner and outer complexifications. In particular, we will compare the number of real operations required in different phases of the algorithms to a generic linear SDMM scheme over $\RR$ and $\CC$ with the same partitioning parameters and number of workers. We compare to generic schemes to showcase that using the complexification does not significantly increase the computational cost. We would like to remind that a scheme defined over $\RR$ would suffer from numerical stability issues, at least if it were based on polynomial interpolation. Notice that using the complexification does not change the size of the matrices sent to the worker compared to a linear SDMM scheme over $\RR$ with the same partitioning parameters, see \eqref{eq:inner_complexification_worker_computation} and \eqref{eq:outer_complexification_worker_computation}. Hence, the communication costs and the worker computational cost is not increased by utilizing the complexification.

The encoding and decoding in SDMM mostly consists of computing linear combinations of (large) matrices. Computing a linear combination of $K$ matrices in $\RR^{t \times s}$ with real coefficients requires $(2K - 1) \cdot ts$ real operations (additions and multiplications). On the other hand, computing a linear combination of $K$ matrices in $\CC^{t \times s}$ with complex coefficients requires $(8K - 2) \cdot ts$ real operations, since each complex multiplication takes 6 real operations and each complex addition takes 2 real operations. We will not consider the computations required to invert the interpolation matrices since the complexity does not scale with the sizes of the input matrices $A$ and $B$.

The computational costs are compiled in \cref{tab:computational_costs}. We count the number of additions and multiplications, since their cost in hardware is roughly equal on modern computer architectures. Even though computing linear combinations of complex matrices is about four times as expensive as computing a linear combination of real matrices, the complex matrices in the complexified schemes are half the size due to the complexification. Therefore, the encoding and decoding costs about twice as much compared to a linear SDMM scheme with the same partitioning. The amount of random symbols from a normal distribution is the same for the complexification based schemes compared to regular linear SDMM schemes over the real numbers.

\begin{table*}[t]
    \centering
    \begin{tabular}{|l|c|c|c|}
        \hline
        Method & Encoding $A \in \RR^{t \times s}$ & Encoding $B \in \RR^{s \times r}$ & Decoding \\ \hline
        Linear SDMM over $\RR$ (inner partitioning) & $(2(M + X) - 1) \cdot \frac{ts}{M}$ & $(2(M + X) - 1) \cdot \frac{sr}{M}$ & $(2R - 1) \cdot tr$ \\
        \cref{alg:real_inner_scheme} (inner partitioning) & $(4(M + X) - 1) \cdot \frac{ts}{M}$ & $(4(M + X) - 1) \cdot \frac{sr}{M}$ & $(2R - 1) \cdot tr$ \\
        Linear SDMM over $\CC$ (inner partitioning) & $(8(M + X) - 2) \cdot \frac{ts}{M}$ & $(8(M + X) - 2) \cdot \frac{sr}{M}$ & $(8R - 2) \cdot tr$ \\ \hline
        Linear SDMM over $\RR$ (outer partitioning) & $(2(K + X) - 1) \cdot \frac{ts}{K}$ & $(2(K + X) - 1) \cdot \frac{sr}{K}$ & $(2R - 1) \cdot tr$ \\
        \cref{alg:real_outer_scheme} (outer partitioning) & $(4(K + X) - 1) \cdot \frac{ts}{K}$ & $(4(L + X) - 1) \cdot \frac{sr}{L}$ & $(4R - 1) \cdot tr$ \\
        Linear SDMM over $\CC$ (outer partitioning) & $(8(K + X) - 2) \cdot \frac{ts}{K}$ & $(8(L + X) - 2) \cdot \frac{sr}{L}$ & $(8R - 2) \cdot tr$ \\ \hline
    \end{tabular}
    \caption{The number of real number operations to perform the encoding and decoding of SDMM for real valued matrices. The scheme over $\CC$ utilizes embedding the matrices in the complex numbers yielding a high cost.}
    \label{tab:computational_costs}
\end{table*}

\section{Numerical Experiments}\label{sec:numerical_experiments}

\pgfplotsset{
    tick label style={
        font=\footnotesize,
        /pgf/number format/sci,
    },
    xlabel style={
        font=\footnotesize
    },
    ylabel style={
        font=\footnotesize
    }
} 

\begin{figure}[t]
    \begin{center}
    \begin{tikzpicture}
    \begin{groupplot}[
        group style={group size=2 by 1, horizontal sep=0.9cm},
        scale only axis,
        width=0.7\linewidth,
        height=0.3\linewidth,
        ymode=log,
        xmode=log,
        grid=major,
        ytickten={-2, -3, ..., -8},
        enlarge y limits=auto,
        legend style={nodes={scale=0.6, transform shape}},
        ]
    \nextgroupplot[width=0.35\linewidth, xlabel={leakage}, ylabel={relative error}]
        \addplot+ [
            error bars/.cd,
                y dir=both, y explicit,
        ] table [
            y error plus=ey+,
            y error minus=ey-,
        ] {data/complex_dft_errors_leakage.dat};
        \addplot+ [
            error bars/.cd,
                y dir=both, y explicit,
        ] table [
            y error plus=ey+,
            y error minus=ey-,
        ] {data/complex_matdot_errors_leakage_0.dat};
        \addplot+ [
            error bars/.cd,
                y dir=both, y explicit,
        ] table [
            y error plus=ey+,
            y error minus=ey-,
        ] {data/complex_matdot_errors_leakage_2.dat};
        \addplot+ [
            error bars/.cd,
                y dir=both, y explicit,
        ] table [
            y error plus=ey+,
            y error minus=ey-,
        ] {data/complex_matdot_errors_leakage_4.dat};
        \legend{$\CC$-DFT, $\CC$-MatDot $S = 0$, $\CC$-MatDot $S = 2$, $\CC$-MatDot $S = 4$}
    \nextgroupplot[width=0.35\linewidth, xlabel={leakage}]
        \addplot+ [
            error bars/.cd,
                y dir=both, y explicit,
        ] table [
            y error plus=ey+,
            y error minus=ey-,
        ] {data/complex_a3s_errors_leakage.dat};
        \addplot+ [
            error bars/.cd,
                y dir=both, y explicit,
        ] table [
            y error plus=ey+,
            y error minus=ey-,
        ] {data/complex_gasp_errors_leakage.dat};
        \legend{$\CC$-A3S, $\CC$-GASP}
    \end{groupplot}
    \end{tikzpicture}
    \end{center}
    \caption{The relative Frobenius error of the proposed complex SDMM schemes as a function of the leakage. The partitioning used is $M = 8$ for the left plot and $K = L = 4$. The security parameter is $X = 3$. The number of stragglers is $S = 4$ on the right. The bars represent 90\% confidence intervals around the median.}
    \label{fig:complex_inner_outer_leakege_error}
\end{figure}

\begin{figure}[t]
    \begin{center}
    \begin{tikzpicture}
    \begin{groupplot}[
        group style={group size=2 by 1},
        scale only axis,
        width=0.7\linewidth,
        height=0.3\linewidth,
        ymode=log,
        xmode=log,
        grid=major,
        ytickten={0, -1, ..., -4},
        enlarge y limits=auto,
        legend style={nodes={scale=0.6, transform shape}},
        ] 
    \nextgroupplot[width=0.35\linewidth, xlabel={leakage}, ylabel={relative error}]
        \addplot+ [
            error bars/.cd,
                y dir=both, y explicit,
        ] table [
            y error plus=ey+,
            y error minus=ey-,
        ] {data/real_dft_errors_leakage.dat};
        \addplot+ [
            error bars/.cd,
                y dir=both, y explicit,
        ] table [
            y error plus=ey+,
            y error minus=ey-,
        ] {data/real_matdot_errors_leakage_0.dat};
        \addplot+ [
            error bars/.cd,
                y dir=both, y explicit,
        ] table [
            y error plus=ey+,
            y error minus=ey-,
        ] {data/real_matdot_errors_leakage_2.dat};
        \addplot+ [
            error bars/.cd,
                y dir=both, y explicit,
        ] table [
            y error plus=ey+,
            y error minus=ey-,
        ] {data/real_matdot_errors_leakage_4.dat};
        \legend{$\RR$-DFT, $\RR$-MatDot $S = 0$, $\RR$-MatDot $S = 2$, $\RR$-MatDot $S = 4$}
        
    \nextgroupplot[width=0.35\linewidth, xlabel={leakage}]
        \addplot+ [
            error bars/.cd,
                y dir=both, y explicit,
        ] table [
            y error plus=ey+,
            y error minus=ey-,
        ] {data/real_a3s_errors_leakage.dat};
        \addplot+ [
            error bars/.cd,
                y dir=both, y explicit,
        ] table [
            y error plus=ey+,
            y error minus=ey-,
        ] {data/real_gasp_errors_leakage.dat};
        \legend{$\RR$-A3S, $\RR$-GASP}
    \end{groupplot}
    \end{tikzpicture}
    \end{center}
    \caption{The relative Frobenius error of the proposed real SDMM schemes as a function of the leakage. The partitioning used is $M = 8$ for the left plot and $K = L = 4$. The security parameter is $X = 3$. The number of stragglers is $S = 4$ on the right. The bars represent 90\% confidence intervals around the median.}
    \label{fig:real_inner_outer_leakege_error}
\end{figure}

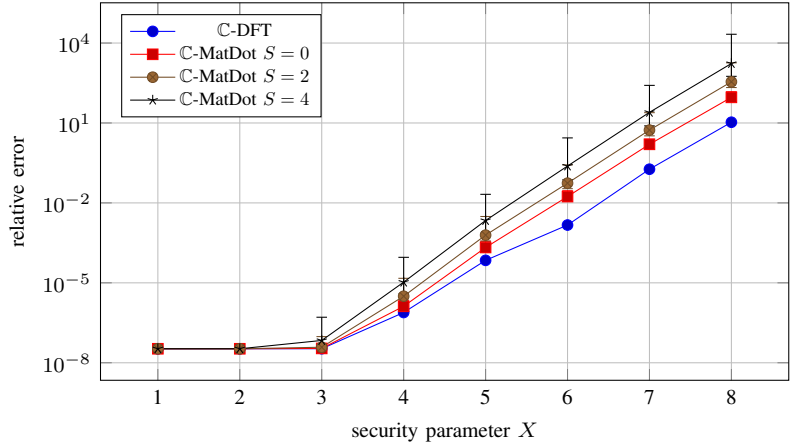
\begin{figure}[t]
    \begin{center}
    \begin{tikzpicture}
    \begin{semilogyaxis}[
        width=0.65\linewidth,
        height=0.4\linewidth,
        grid=major,
        legend style={nodes={scale=0.7, transform shape}, legend pos=north west},
        xlabel={security parameter $X$},
        ylabel={relative error},
        xtick={1, 2, 3, 4, 5, 6, 7, 8},
        xticklabels={1, 2, 3, 4, 5, 6, 7, 8},
        ytickten={4, 1, ..., -8},
        scaled x ticks=true,
        ]
        \addplot+ [
            error bars/.cd,
                y dir=both, y explicit,
        ] table [
            y error plus=ey+,
            y error minus=ey-,
        ] {data/complex_dft_errors_security.dat};
        \addplot+ [
            error bars/.cd,
                y dir=both, y explicit,
        ] table [
            y error plus=ey+,
            y error minus=ey-,
        ] {data/complex_matdot_errors_security_0.dat};
        \addplot+ [
            error bars/.cd,
                y dir=both, y explicit,
        ] table [
            y error plus=ey+,
            y error minus=ey-,
        ] {data/complex_matdot_errors_security_2.dat};
        \addplot+ [
            error bars/.cd,
                y dir=both, y explicit,
        ] table [
            y error plus=ey+,
            y error minus=ey-,
        ] {data/complex_matdot_errors_security_4.dat};
        \legend{$\CC$-DFT, $\CC$-MatDot $S = 0$, $\CC$-MatDot $S = 2$, $\CC$-MatDot $S = 4$}
    \end{semilogyaxis}
    \end{tikzpicture}
    \end{center}
    \caption{The relative Frobenius error as a function of the security parameter $X$ for inner partitioning schemes over the complex numbers. The bars represent 90\% confidence intervals around the median.}
    \label{fig:complex_inner_security_error}
\end{figure}

To experimentally test the proposed schemes, we choose random matrices $A$ and $B$ uniformly from either the interval $[-1, 1]$ or from the unit disk in $\CC$. We then run the algorithms with different partitioning parameters $M, K, L$, security parameters $X$, and leakage levels $\delta$, and measure the error in the result. The true result is computed as the product of the matrices without doing SDMM and the error is measured as the relative error between the computed product and the true product. We perform all of the computations with single-precision real or complex floating point numbers. Thus, every real number is represented as 4 bytes, and every complex number is represented as 8 bytes. The input matrices $A$ and $B$ were chosen to be $256 \times 256$ matrices.

The results of the numerical experiments can be seen in \cref{fig:complex_inner_outer_leakege_error,fig:real_inner_outer_leakege_error,fig:complex_inner_security_error}. It is clear from the figures that there is a trade-off between the accuracy and the leakage of the schemes, i.e., a lower leakage generally implies a higher relative error. From \cref{fig:complex_inner_outer_leakege_error,fig:real_inner_outer_leakege_error} we see that the schemes over the real numbers generally have a larger error compared to the schemes over the complex numbers due to the higher number of worker nodes.

In \cref{fig:complex_inner_security_error} we see that the relative error in the result grows exponentially as a function of the security parameter $X$. This comes from the fact that the required noise variance grows exponentially in $X$ as seen in \eqref{eq:required_noise_variance_complex}. Therefore, the proposed schemes are not well suited for a large security parameter $X$.

We do not compare our proposed schemes to ones based purely on real number arithmetic since it is not clear how such a scheme would be made secure, i.e., what the required noise variance would be. It is reasonable to assume that such a scheme would have a large condition number if it were based on polynomial interpolation, since Vadermonde matrices are ill-conditioned over the real numbers. On the other hand, our constructions achieve a polynomial condition number in $N$ if the number of stragglers $S = N - R$ is kept constant, see \cref{thm:large_vandermonde_inverse_2_norm_upper_bound}.

\section{Conclusion}

We constructed schemes for SDMM over the complex and real numbers that have small leakage and good stability. For the complex numbers we utilized polynomial interpolation with roots of unity, which provides good numerical stability. Over the real numbers we proposed a method called the complexification, which can be used to embed the real computations into the complex numbers in a more efficient way than the trivial embedding.

\appendix\label{sec:appendix}

In this appendix, we consider $n \times k$ Vandermonde matrices to be of the following form
\begin{equation*}
    V = \begin{pmatrix}
        1 & \alpha_1 & \alpha_1^2 & \cdots & \alpha_1^{k - 1} \\
        1 & \alpha_2 & \alpha_2^2 & \cdots & \alpha_2^{k - 1} \\
        1 & \alpha_3 & \alpha_3^2 & \cdots & \alpha_3^{k - 1} \\
        \vdots & \vdots & \vdots & \ddots & \vdots \\
        1 & \alpha_n & \alpha_n^2 & \cdots & \alpha_n^{k - 1}
    \end{pmatrix},
\end{equation*}
where the elements $\alpha_1, \dots, \alpha_n$ are the evaluation points of the Vandermonde matrix. In particular, $V_{ij} = \alpha_i^{j - 1}$ for $i \in [n], j \in [k]$. An $m \times m$ Vandermonde matrix is invertible if and only if the evaluation points are pairwise distinct. It is well known that Vandermonde matrices over $\RR$ are ill-conditioned \cite{pan2016how}. On the other hand, Vandermonde matrices over $\CC$ with evaluation points being equally spaced on the unit circle are perfectly conditioned. In this section, we consider $m \times m$ Vandermonde matrices whose evaluation points are distinct $n$th roots of unity for $m \leq n$. Let $\omega_n = \exp(\tfrac{2\pi \imag}{n})$ be a primitive $n$th root of unity for $n \geq 1$.

Let $V$ be an $n \times n$ Vandermonde matrix whose evaluation points are the distinct $n$th roots of unity, i.e., $\alpha_i = \omega_n^{i - 1}$. Then, $V^{-1} = \tfrac{1}{n} V^*$. This means that $\tfrac{1}{\sqrt{n}} V$ is unitary and $\lVert \tfrac{1}{\sqrt{n}} V \rVert_2 = 1$, so $\lVert V \rVert_2 = \lVert V^* \rVert_2 = \sqrt{n}$. Therefore, $\kappa_2(V) = \lVert V \rVert_2 \lVert V^{-1} \rVert_2 = 1$.

Let $W$ be an $m \times m$ Vandermonde matrix whose evaluation points $\alpha_i$ are distinct $n$th roots of unity for $m \leq n$. We will denote $d = n - m$. Let us write $\alpha_i = \omega_n^{k_i}$ for integers $0 \leq k_1 < \cdots < k_m < n$. Additionally, let us denote $\mathcal{K} = \{k_1, \dots, k_m\}$ and $\mathcal{L} = \{0, 1, \dots, n - 1 \} \setminus \mathcal{K}$. We start by stating an upper bound on the norm of $W$.

\begin{lemma}\label{lem:vandermonde_2_norm_upper_bound}
Let $W$ be an $m \times m$ Vandermonde matrix whose evaluation points are distinct $n$th roots of unity for $m \leq n$. Then, $\lVert W \rVert_2 \leq \sqrt{n}$.
\end{lemma}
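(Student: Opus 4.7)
The plan is to realize $W$ as a submatrix of the full $n \times n$ Vandermonde matrix $V$ with all $n$th roots of unity as evaluation points, and then invoke the standard fact that the spectral norm cannot increase under passing to a submatrix.

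More precisely, let $V$ be the $n \times n$ Vandermonde matrix with $V_{ij} = \omega_n^{(i-1)(j-1)}$, as in the paragraph preceding the lemma, so that $\lVert V \rVert_2 = \sqrt{n}$. Writing the evaluation points of $W$ as $\alpha_i = \omega_n^{k_i}$ for integers $0 \le k_1 < \cdots < k_m < n$, we have $W_{ij} = \omega_n^{k_i(j-1)} = V_{k_i+1,\,j}$ for $i,j \in [m]$. Hence $W$ is precisely the submatrix of $V$ obtained by keeping rows indexed by $\{k_1+1,\dots,k_m+1\}$ and the first $m$ columns.

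Next I would record the submatrix inequality: if $B$ is a submatrix of $A$ obtained by selecting a subset of rows $S$ and a subset of columns $T$, then $\lVert B \rVert_2 \le \lVert A \rVert_2$. The usual one-line argument is to take $y \in \CC^{|T|}$, zero-extend it to $\tilde{y} \in \CC^n$ supported on $T$, and note that $By$ is a coordinate restriction of $A\tilde{y}$ to $S$, so
\begin{equation*}
\lVert B y \rVert_2 \le \lVert A \tilde{y} \rVert_2 \le \lVert A \rVert_2 \lVert \tilde{y} \rVert_2 = \lVert A \rVert_2 \lVert y \rVert_2.
\end{equation*}
Applying this with $A = V$ and $B = W$ yields $\lVert W \rVert_2 \le \lVert V \rVert_2 = \sqrt{n}$, which is the claim.

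There is no real obstacle here; the only thing to be careful about is the bookkeeping that $W$ genuinely sits inside $V$ as a row-and-column submatrix, which follows from the crucial identity $\alpha_i^{j-1} = \omega_n^{k_i(j-1)}$ and the fact that all exponents $j-1 \in \{0,1,\dots,m-1\}$ lie within the range $\{0,1,\dots,n-1\}$ of column indices of $V$ since $m \le n$.
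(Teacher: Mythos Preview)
Your proof is correct and follows essentially the same approach as the paper: both embed $W$ as a submatrix of the full $n\times n$ DFT Vandermonde matrix $V$ and argue that the spectral norm does not increase under restriction. The paper carries out the zero-extension argument directly on $W$ (first extending rows, then columns) rather than first stating the general submatrix inequality, but the underlying computation is identical.
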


\begin{proof}
Let $x \in \CC^m$ be a vector. Add rows to $W$ to obtain an $n \times m$ Vandermonde matrix $\overline{W}$ whose evaluation points are the $n$th roots of unity. By further adding columns to $\overline{W}$ we obtain the $n \times n$ Vandermonde matrix $V$ whose evaluation points are the $n$th roots of unity. Then,
\begin{equation*}
    \lVert Wx \rVert_2 \leq \lVert \overline{W}x \rVert_2 = \lVert V \overline{x} \rVert_2,
\end{equation*}
where $\overline{x} = (x, 0) \in \CC^n$. We have that $\lVert x \rVert_2 = \lVert \overline{x} \rVert_2$, so
\begin{equation*}
    \lVert W \rVert_2 = \max_{\lVert x \rVert_2 = 1} \lVert Wx \rVert_2 \leq \max_{\lVert \overline{x} \rVert_2 = 1} \lVert V \overline{x} \rVert_2 = \lVert V \rVert_2 = \sqrt{n}. \qedhere
\end{equation*}
\end{proof}

To compute the condition number we also need to know the norm of the inverse of $W$. We start doing this by stating the following decomposition for our matrix.

\begin{lemma}\label{lem:vandermonde_inverse_decomposition}
The inverse of $W$ can be written as $W^{-1} = B^{-1} C A$, where $B$ is an $m \times m$ Vandermonde matrix with evaluation points $\beta_1, \dots, \beta_m$, $A$ is a diagonal $m \times m$ matrix and $C$ is an $m \times m$ matrix with
\begin{equation*}
    A_{jj} = \prod_{j' \neq j} (\alpha_j - \alpha_{j'})^{-1}, \quad C_{ij} = \prod_{j' \neq j} (\beta_i - \alpha_{j'})
\end{equation*}
for $i, j \in \{1, \dots, m\}$.
\end{lemma}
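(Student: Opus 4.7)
My plan is to identify the product $B^{-1}CA$ with the coefficient matrix of the Lagrange basis polynomials for the nodes $\alpha_1, \dots, \alpha_m$, from which $W \cdot (B^{-1}CA) = I$ follows immediately by the defining interpolation property of that basis.

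First, I would reinterpret the matrix $C$ as a matrix of polynomial evaluations. For each $j$, define the degree $m-1$ polynomial $q_j(x) = \prod_{j' \neq j}(x - \alpha_{j'})$, so that $C_{ij} = q_j(\beta_i)$. Writing $q_j(x) = \sum_{k=0}^{m-1} q_{j,k} x^k$ and setting $Q_{i,j} = q_{j, i-1}$, the identity $C_{ij} = \sum_{i'=1}^m B_{i,i'} Q_{i',j}$ gives the factorization $C = BQ$. Consequently $B^{-1}C = Q$, which in particular shows the product $B^{-1}C$ does not depend on the choice of $\beta_1, \dots, \beta_m$, and therefore $B^{-1}CA = QA$.

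Next, I would recognize $A_{jj} q_j$ as the $j$th Lagrange basis polynomial
\begin{equation*}
    L_j(x) = \prod_{j' \neq j} \frac{x - \alpha_{j'}}{\alpha_j - \alpha_{j'}},
\end{equation*}
so the $j$th column of $QA$ collects the monomial-basis coefficients of $L_j$. To finish, I compute
\begin{equation*}
    (W \cdot QA)_{k,j} = \sum_{i=1}^m \alpha_k^{i-1} q_{j,i-1} A_{jj} = A_{jj}\, q_j(\alpha_k) = L_j(\alpha_k) = \delta_{kj},
\end{equation*}
where the last equality uses that $q_j(\alpha_k) = 0$ for $k \neq j$ (the factor $\alpha_k - \alpha_k$ appears) and $q_j(\alpha_j) = A_{jj}^{-1}$ by definition of $A_{jj}$. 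Hence $W \cdot (B^{-1}CA) = I$, proving the claim.

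There is no real obstacle here: the only substantive step is spotting the Lagrange interpolation structure, after which the rest is bookkeeping. An expository comment worth including is that, since $B^{-1}C = Q$ is independent of $\beta_1, \dots, \beta_m$, the decomposition is useful precisely when one can choose the $\beta_i$ so that $B$ and $C$ are individually easy to bound in norm, which is exactly how it will be used in the subsequent condition number estimates.
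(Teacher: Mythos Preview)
Your proof is correct and follows essentially the same approach as the paper: both arguments identify the columns of $W^{-1}$ with the monomial coefficients of the Lagrange basis polynomials $L_j(x) = \prod_{j'\neq j}(x-\alpha_{j'})/(\alpha_j-\alpha_{j'})$, and then use that $C_{ij} = q_j(\beta_i)$ is an evaluation, so that multiplying by $B$ (or $B^{-1}$) passes between coefficients and evaluations. The only cosmetic difference is the order of operations---the paper computes $BW^{-1}$ and shows it equals $CA$, whereas you factor $C=BQ$ first and then verify $W(QA)=I$---but the content is identical.
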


\begin{proof}
It is well-known that the entries in the inverse of a Vandermonde matrix can be written as the coefficients of the Lagrange interpolation polynomials $\lambda_j(x)$. In particular, $(W^{-1})_{kj} = \lambda_j^{(k)}$, where
\begin{equation*}
    \lambda_j(x) = \prod_{j' \neq j} \frac{x - \alpha_{j'}}{\alpha_j - \alpha_{j'}} = \sum_{k=1}^m \lambda_j^{(k)} x^{k-1}.
\end{equation*}
Therefore,
\begin{align*}
    (BW^{-1})_{ij} &= \sum_{k=1}^m \lambda_j^{(k)} \beta_i^{k-1} \\
    &= \prod_{j' \neq j} (\beta_i - \alpha_{j'}) \cdot \prod_{j' \neq j} (\alpha_j - \alpha_{j'})^{-1} = (CA)_{ij}.
\end{align*}
Hence, $BW^{-1} = CA$.
\end{proof}

\begin{lemma}\label{lem:large_A_matrix_2_norm_upper_bound}
Let $A$ be as in \cref{lem:vandermonde_inverse_decomposition}. Then, $\lVert A \rVert_2 \leq \frac{2^d}{n}$.
\end{lemma}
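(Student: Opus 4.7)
The plan is to exploit that $A$ is diagonal and that the evaluation points live on the unit circle, reducing the problem to a product identity for roots of unity.

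Since $A$ is diagonal, $\lVert A \rVert_2 = \max_j |A_{jj}| = \max_j \prod_{j' \neq j} |\alpha_j - \alpha_{j'}|^{-1}$, so it suffices to show that $\prod_{j' \neq j} |\alpha_j - \alpha_{j'}| \geq n / 2^d$ for every $j \in [m]$. Write $\alpha_j = \omega_n^{k_j}$ with $k_j \in \mathcal{K}$, and recall $\mathcal{L} = \{0, 1, \dots, n-1\} \setminus \mathcal{K}$ has cardinality $d$.

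The first step is to use the identity $x^n - 1 = \prod_{k=0}^{n-1}(x - \omega_n^k)$. Dividing by $x - \omega_n^{k_j}$ and evaluating at $x = \omega_n^{k_j}$ (via L'H\^opital, or equivalently via differentiating $x^n - 1$) yields
\begin{equation*}
    \prod_{\substack{k=0 \\ k \neq k_j}}^{n-1} (\omega_n^{k_j} - \omega_n^k) = n \omega_n^{k_j(n-1)},
\end{equation*}
which has modulus exactly $n$. Splitting the index set $\{0, \dots, n-1\} \setminus \{k_j\}$ into $\mathcal{K} \setminus \{k_j\}$ and $\mathcal{L}$, we obtain
\begin{equation*}
    \prod_{j' \neq j} |\alpha_j - \alpha_{j'}| \cdot \prod_{\ell \in \mathcal{L}} |\omega_n^{k_j} - \omega_n^\ell| = n,
\end{equation*}
so $|A_{jj}| = \frac{1}{n} \prod_{\ell \in \mathcal{L}} |\omega_n^{k_j} - \omega_n^\ell|$.

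To finish, I would bound each factor in the remaining product trivially: since $\omega_n^{k_j}$ and $\omega_n^\ell$ lie on the unit circle, $|\omega_n^{k_j} - \omega_n^\ell| \leq 2$. As $|\mathcal{L}| = d$, this gives $|A_{jj}| \leq 2^d / n$, and taking the maximum over $j$ yields the claim. There is no real obstacle here; the only thing to be careful about is correctly identifying the missing factors that convert the full product (value $n$) into the Vandermonde product appearing in $A_{jj}^{-1}$, and the rest is the elementary $|{\cdot}| \leq 2$ estimate on the unit circle.
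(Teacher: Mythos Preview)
Your proof is correct and follows essentially the same approach as the paper: split the full product $\prod_{k \neq k_j} |\omega_n^{k_j} - \omega_n^k| = n$ into the factors over $\mathcal{K} \setminus \{k_j\}$ and $\mathcal{L}$, then bound each of the $d$ factors over $\mathcal{L}$ by $2$. The only cosmetic difference is that the paper obtains the value $n$ via the identity $\prod_{k=1}^{n-1}(x - \omega_n^k) = 1 + x + \cdots + x^{n-1}$ evaluated at $x = 1$, whereas you use the derivative of $x^n - 1$; these are equivalent standard computations.
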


\begin{proof}
From the definition of the matrix $A$ and by writing $\alpha_j = \omega_n^{k_j}$ we get that
\begin{align*}
    \lvert A_{jj} \rvert^{-1} &= \big\lvert \prod_{j' \neq j} (\alpha_j - \alpha_{j'}) \big\rvert \\
    &= \big\lvert \prod_{\substack{k \in \mathcal{K} \\ k \neq k_j}} (\omega_n^{k_j} - \omega_n^k) \big\rvert \\
    &= \frac{\big\lvert \prod_{k \neq k_j} (\omega_n^{k_j} - \omega_n^k) \big\rvert}{\big\lvert \prod_{\ell \in \mathcal{L}} (\omega_n^{k_j} - \omega_n^\ell) \big\rvert}.
\end{align*}
Clearly, $\lvert \omega_n^{k_j} - \omega_n^\ell \rvert \leq 2$, so $\big\lvert \prod_{\ell \in \mathcal{L}} (\omega_n^{k_j} - \omega_n^\ell) \big\rvert \leq 2^d$. Furthermore,
\begin{equation*}
    \big\lvert \prod_{k \neq k_j} (\omega_n^{k_j} - \omega_n^k) \big\rvert = \big\lvert \prod_{k \neq k_j} (1 - \omega_n^{k - k_j}) \big\rvert = n
\end{equation*}
as $\prod_{k=1}^{n - 1} (x - \omega_n^k) = 1 + x + \cdots + x^{n - 1}$. Therefore, $\lVert A \rVert_2 = \max_j \lvert A_{jj} \rvert \leq \frac{2^d}{n}$.
\end{proof}

\begin{lemma}\label{lem:large_C_matrix_2_norm_upper_bound}
Let $C$ be as in \cref{lem:vandermonde_inverse_decomposition}. If $\beta_i = \omega_{2mn} \cdot \omega_m^{i-1}$, then $\lVert C \rVert_\mathrm{max} \leq \frac{mn^{d+1}}{2^d d!}$.
\end{lemma}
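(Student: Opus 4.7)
The plan is to exploit the polynomial identity $\prod_{k=0}^{n-1}(\beta_i - \omega_n^k) = \beta_i^n - 1$. Since $\mathcal{K} \setminus \{k_j\}$ together with $\{k_j\} \cup \mathcal{L}$ partitions $\{0,1,\dots,n-1\}$, we can write
\begin{equation*}
    |C_{ij}| = \prod_{k \in \mathcal{K}, k \neq k_j} |\beta_i - \omega_n^k| = \frac{|\beta_i^n - 1|}{|\beta_i - \omega_n^{k_j}| \cdot \prod_{\ell \in \mathcal{L}} |\beta_i - \omega_n^\ell|}.
\end{equation*}
The numerator is trivially at most $2$ since $|\beta_i|=1$, so the whole task reduces to finding a \emph{lower} bound on the product of the $d+1$ distances appearing in the denominator.

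For this, I would compute each distance explicitly. With $\beta_i = \omega_{2mn}\,\omega_m^{i-1}$, the argument of $\beta_i \overline{\omega_n^k}$ equals $\pi t_{i,k}/(mn)$ where $t_{i,k} := 1 + 2n(i-1) - 2mk$, reduced modulo $2mn$ to a representative in $(-mn, mn]$. In particular $t_{i,k}$ is always odd and nonzero, and
\begin{equation*}
    |\beta_i - \omega_n^k| = 2\bigl|\sin\bigl(\tfrac{\pi t_{i,k}}{2mn}\bigr)\bigr| \geq \frac{2|t_{i,k}|}{mn}
\end{equation*}
by Jordan's inequality $|\sin(x)| \geq 2|x|/\pi$ on $[-\pi/2, \pi/2]$, which applies since $|t_{i,k}| \leq mn$.

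The key combinatorial step, and the main obstacle, is to lower bound $\prod_{\ell \in \mathcal{L} \cup \{k_j\}}|t_{i,\ell}|$ uniformly over the choice of $\mathcal{K}$. As $\ell$ ranges over $\{0,\dots,n-1\}$, the integers $t_{i,\ell}$ are all congruent to the same odd $r_i := (1 + 2n(i-1)) \bmod 2m$, so they form the complete set of odd integers in $(-mn, mn]$ lying in one fixed residue class modulo $2m$. Choosing the representative $r' \in (-m, m]$ of $r_i$ and writing $r'' = |r'| \leq m$, the sorted absolute values are
\begin{equation*}
    s_1 = r'', \quad s_2 = 2m - r'', \quad s_3 = 2m + r'', \quad s_4 = 4m - r'', \quad s_5 = 4m + r'', \quad \dots
\end{equation*}
From this list one reads off $s_1 \geq 1$ and, using $r'' \leq m$, $s_k \geq (k-1)m$ for every $k \geq 2$. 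Therefore the product of the $d+1$ smallest $|t_{i,\ell}|$ (a worst-case lower bound over all size-$(d+1)$ subsets) is
\begin{equation*}
    \prod_{k=1}^{d+1} s_k \geq 1 \cdot \prod_{k=2}^{d+1} (k-1)m = m^d \cdot d!.
\end{equation*}

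Combining the three steps, the denominator is bounded below by $(2/(mn))^{d+1} \cdot m^d \cdot d!$, so
\begin{equation*}
    |C_{ij}| \leq \frac{2}{\frac{2^{d+1}}{(mn)^{d+1}} m^d d!} = \frac{(mn)^{d+1}}{2^{d} m^d d!} = \frac{m n^{d+1}}{2^d d!},
\end{equation*}
which is the claimed bound. The only delicate point to handle carefully is the case analysis on the sign of the minimal representative $r'$, but by symmetry of the set $\{t_{i,\ell}\}$ around shifts by $2m$ this only affects labeling and not the sorted list of absolute values.
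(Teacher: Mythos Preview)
Your proof is correct and follows essentially the same approach as the paper: both rewrite $C_{ij}$ via the factorization $\prod_k(\beta_i - \omega_n^k) = \beta_i^n - 1$, bound the numerator by $2$, and lower-bound the $d+1$ distances in the denominator by $\tfrac{2}{mn}, \tfrac{2}{n}, \tfrac{4}{n}, \ldots, \tfrac{2d}{n}$ using the chord-length inequality $2\sin(\theta/2) \geq 2\theta/\pi$. Your integer bookkeeping via the odd residues $t_{i,k}$ makes the sorting of distances more explicit than the paper's geometric ``closest, second closest, \ldots'' argument, but the underlying estimates are identical.
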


\begin{proof}
By construction, the element $\beta_i$ is not an $n$th root of unity as
\begin{equation*}
    \beta_i^n = \omega_{2mn}^n \omega_m^{(i-1)n} = \omega_{2m}^1 \omega_{2m}^{2(i-1)n} = \omega_{2m}^{2(i-1)n + 1} \neq 1,
\end{equation*}
since the exponent is not even. By utilizing the factorization $x^n - 1 = \prod_{k=0}^{n-1} (x - \omega_n^k)$, we write
\begin{align*}
    C_{ij} &= \prod_{j' \neq j} (\beta_i - \alpha_{j'}) \\
    &= \prod_{\substack{k \in \mathcal{K} \\ k \neq k_j}} (\beta_i - \omega_n^k) \\
    &= \frac{\beta_i^n - 1}{(\beta_i - \omega_n^{k_j}) \prod_{\ell \in \mathcal{L}} (\beta_i - \omega_n^\ell)}.
\end{align*}
By using the triangle inequality, we find that $|\beta_i^n - 1| \leq |\beta_i^n| + 1 = 2$. Therefore,
\begin{equation*}
    \lvert C_{ij} \rvert \leq \frac{2}{\lvert (\beta_i - \omega_n^{k_j}) \prod_{\ell \in \mathcal{L}} (\beta_i - \omega_n^\ell) \rvert}.
\end{equation*}
Let us focus on he denominator of the above expression. In particular, we need to lower bound the product of distances between $\beta_i$ and any $d + 1$ $n$th roots of unity.

Given two complex numbers $z, z'$ on the unit circle with angle $0 \leq \theta \leq \pi$ between them, the distance between them is $\lvert z - z' \rvert = 2\sin(\theta / 2) \geq 2\theta / \pi$. We may write $\beta_i = \omega_{2mn}^{2(i-1)n + 1}$ and $\omega_n^k = \omega_{2mn}^{2mk}$. Therefore, the angle between $\beta_i$ and the closest $\omega_n^k$ is $\theta \geq \frac{2\pi}{2mn} = \frac{\pi}{mn}$. The smallest distance is then $\geq \frac{2}{mn}$. The second closest $\omega_n^k$ is at an angle of $\theta \geq \frac{\pi}{n}$, since otherwise it would be the closest. Continuing in this fashion, we get that the product is
\begin{equation*}
    \lvert (\beta_i - \omega_n^{k_j}) \prod_{\ell \in \mathcal{L}} (\beta_i - \omega_n^\ell) \rvert \geq \frac{2}{mn} \cdot \frac{2}{n} \cdot \frac{4}{n} \cdots \frac{2d}{n} = \frac{2^{d+1} d!}{mn^{d+1}}. \qedhere
\end{equation*}
\end{proof}

The following theorem states that for a constant $d = n - m$, the condition number of an $m \times m$ Vandermonde matrix whose evaluation points are distinct $n$th roots of unity is polynomial in $n$.

\begin{theorem}\label{thm:large_vandermonde_inverse_2_norm_upper_bound}
Let $W$ be an $m \times m$ Vandermonde matrix whose evaluation points are distinct $n$th roots of unity for $m \leq n$. Then, $\lVert W^{-1} \rVert_2 \leq \frac{1}{d!} n^{d + \sfrac{3}{2}}$, where $d = n - m$. In particular, $\kappa_2(W) \leq \frac{1}{d!} n^{d + 2}$.
\end{theorem}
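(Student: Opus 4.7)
The plan is to apply the decomposition $W^{-1} = B^{-1} C A$ from \cref{lem:vandermonde_inverse_decomposition} with the specific auxiliary evaluation points $\beta_i = \omega_{2mn} \cdot \omega_m^{i-1}$ used in \cref{lem:large_C_matrix_2_norm_upper_bound}, and then to bound each factor separately in the spectral norm via submultiplicativity
\begin{equation*}
    \lVert W^{-1} \rVert_2 \leq \lVert B^{-1} \rVert_2 \, \lVert C \rVert_2 \, \lVert A \rVert_2.
\end{equation*}
The factor $\lVert A \rVert_2 \leq 2^d/n$ is already available from \cref{lem:large_A_matrix_2_norm_upper_bound}, so the two remaining tasks are estimating $\lVert B^{-1} \rVert_2$ and converting the entrywise bound on $C$ from \cref{lem:large_C_matrix_2_norm_upper_bound} into a spectral-norm bound.

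For $B^{-1}$, I would observe that with the chosen $\beta_i$ the entries factor as $B_{ij} = \beta_i^{j-1} = \omega_{2mn}^{j-1}\,\omega_m^{(i-1)(j-1)}$, so $B = V_m D$, where $V_m$ is the $m \times m$ Vandermonde matrix with the $m$th roots of unity as evaluation points and $D = \diag(\omega_{2mn}^0, \omega_{2mn}^1, \ldots, \omega_{2mn}^{m-1})$ is unitary. By the calculation at the start of the appendix, $\tfrac{1}{\sqrt{m}} V_m$ is unitary, hence $\lVert V_m^{-1} \rVert_2 = 1/\sqrt{m}$ and $\lVert B^{-1} \rVert_2 \leq \lVert D^{-1} \rVert_2 \, \lVert V_m^{-1} \rVert_2 = 1/\sqrt{m}$.

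For $C$, I would use the crude but sufficient chain $\lVert C \rVert_2 \leq \lVert C \rVert_F \leq m \lVert C \rVert_\mathrm{max}$ valid for any $m \times m$ matrix, together with \cref{lem:large_C_matrix_2_norm_upper_bound}, to obtain $\lVert C \rVert_2 \leq m^2 n^{d+1} / (2^d d!)$. Multiplying the three bounds then gives
\begin{equation*}
    \lVert W^{-1} \rVert_2 \leq \frac{1}{\sqrt{m}} \cdot \frac{m^2 n^{d+1}}{2^d d!} \cdot \frac{2^d}{n} = \frac{m^{3/2} n^d}{d!} \leq \frac{n^{d + 3/2}}{d!},
\end{equation*}
where the last step uses $m \leq n$. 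The condition-number bound $\kappa_2(W) \leq n^{d+2}/d!$ then follows immediately from $\lVert W \rVert_2 \leq \sqrt{n}$, which is \cref{lem:vandermonde_2_norm_upper_bound}.

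The genuinely delicate work has already been absorbed into the preceding lemmas: the points $\beta_i = \omega_{2mn} \omega_m^{i-1}$ were engineered so that $B$ becomes unitary-times-DFT while still staying well separated from the $n$th roots of unity in $\mathcal{K}$, which is what keeps the denominators in \cref{lem:large_C_matrix_2_norm_upper_bound} controlled. Granted those lemmas, what remains is bookkeeping with submultiplicativity and collecting exponents of $m$ and $n$. The one step at this stage that would bite if overlooked is the $B = V_m D$ factorization; substituting the generic bound $\lVert B^{-1} \rVert_2 \leq \sqrt{m}$ from \cref{lem:vandermonde_2_norm_upper_bound} applied to $B^{-*}$ would cost an extra factor of $m$ and spoil the exponent claimed in the theorem.
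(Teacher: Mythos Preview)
Your proof is correct and follows the paper's argument essentially line for line: the same decomposition $W^{-1}=B^{-1}CA$ with the same choice $\beta_i=\omega_{2mn}\omega_m^{i-1}$, the same three bounds $\lVert B^{-1}\rVert_2=1/\sqrt{m}$, $\lVert C\rVert_2\le m\lVert C\rVert_{\max}$, $\lVert A\rVert_2\le 2^d/n$, and the same final estimate $m^{3/2}n^d/d!\le n^{d+3/2}/d!$. Your explicit $B=V_mD$ factorization is a welcome justification of the $\lVert B^{-1}\rVert_2=1/\sqrt{m}$ step that the paper simply asserts.
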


\begin{proof}
Write $W^{-1} = B^{-1} C A$ by \cref{lem:vandermonde_inverse_decomposition} for $\beta_i = \omega_{2mn} \cdot \omega_m^{i-1}$. Then,
\begin{align*}
    \lVert W^{-1} \rVert_2 &\leq \lVert B^{-1} \rVert_2 \cdot m \lVert C \rVert_\mathrm{max} \cdot \rVert A \rVert_2 \\
    &\leq \frac{1}{\sqrt{m}} \cdot m \frac{mn^{d+1}}{2^d d!} \cdot \frac{2^d}{n} \\
    &= \frac{1}{d!} m^{\tfrac{3}{2}} n^d \\
    &\leq \frac{1}{d!} n^{d + \tfrac{3}{2}}.
\end{align*}
Here we used the fact that the norm is submultiplicative, $\lVert C \rVert_2 \leq m \lVert C \rVert_\mathrm{max}$, $\lVert B^{-1} \rVert_2 = \tfrac{1}{\sqrt{m}}$, and \cref{lem:large_A_matrix_2_norm_upper_bound,lem:large_C_matrix_2_norm_upper_bound}. From the definition of condition number and \cref{lem:vandermonde_2_norm_upper_bound} we obtain
\begin{equation*}
    \kappa_2(W) = \lVert W \rVert_2 \lVert W^{-1} \rVert_2 \leq \sqrt{n} \cdot \frac{1}{d!} n^{d + \tfrac{3}{2}} = \frac{1}{d!} n^{d + 2}. \qedhere
\end{equation*}
\end{proof}

Instead of considering the parameter $d = n - m$ to be constant and looking at asymptotics as $n$ grows, we let $m$ be constant. Consider the sequence defined by the product of the first $m$ entries of the sequence $1, 1, 2, 2, 3, 3, 4, 4, \dots$. It is simple to verify that the $m$th entry is given by $\Pi(m)$, where $\Pi$ is defined by $\Pi(2k) = k!^2$ and $\Pi(2k + 1) = k!^2 \cdot (k + 1)$. This sequence is A010551 on the OEIS \cite{oeis}.

\begin{lemma}\label{lem:small_A_matrix_2_norm_upper_bound}
Let $A$ be as in \cref{lem:vandermonde_inverse_decomposition}. Then, $\lVert A \rVert_2 \leq \frac{n^{m-1}}{4^{m-1} \Pi(m - 1)}$.
\end{lemma}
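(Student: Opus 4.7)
The plan is to exploit that $A$ is diagonal so $\lVert A \rVert_2 = \max_j |A_{jj}|$, and reduce the bound to a lower bound on $\prod_{j' \neq j} |\alpha_j - \alpha_{j'}|$ for arbitrary choice of $m$ distinct $n$th roots of unity. Writing $\alpha_j = \omega_n^{k_j}$ for distinct $k_j \in \{0, 1, \dots, n-1\}$, each factor becomes $|\alpha_j - \alpha_{j'}| = 2 |\sin(\pi(k_j - k_{j'})/n)|$, which can be rewritten in terms of the circular distance $\delta_{jj'} = \min(|k_j - k_{j'}|, n - |k_j - k_{j'}|) \in \{1, \dots, \lfloor n/2 \rfloor\}$. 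Using the elementary inequality $\sin(\theta) \geq 2\theta/\pi$ on $[0, \pi/2]$, I obtain $|\alpha_j - \alpha_{j'}| \geq \tfrac{4}{n} \delta_{jj'}$, so
\begin{equation*}
    \prod_{j' \neq j} |\alpha_j - \alpha_{j'}| \geq \frac{4^{m-1}}{n^{m-1}} \prod_{j' \neq j} \delta_{jj'}.
\end{equation*}

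Next, I lower bound $\prod_{j' \neq j} \delta_{jj'}$. The key combinatorial observation is that for a fixed $k_j$, at each circular distance $d \geq 1$ there are at most two positions on $\mathbb{Z}/n\mathbb{Z}$ (namely $k_j + d$ and $k_j - d \mod n$), so the multiset of distances $\{\delta_{jj'} : j' \neq j\}$ sorted in non-decreasing order is pointwise at least as large as the first $m-1$ entries of $1, 1, 2, 2, 3, 3, \dots$. The product of the first $m-1$ entries of this sequence is exactly $\Pi(m-1)$ by the recursive definition given in the excerpt. Therefore $\prod_{j' \neq j} \delta_{jj'} \geq \Pi(m-1)$.

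Combining the two inequalities yields $|A_{jj}|^{-1} = \prod_{j' \neq j} |\alpha_j - \alpha_{j'}| \geq \frac{4^{m-1} \Pi(m-1)}{n^{m-1}}$, giving the desired bound $\lVert A \rVert_2 = \max_j |A_{jj}| \leq \frac{n^{m-1}}{4^{m-1} \Pi(m-1)}$. The only mildly subtle step is the combinatorial lower bound on the product of circular distances, but it is essentially immediate once one recognizes the constraint that at most two other evaluation points can sit at each circular distance from $\alpha_j$; this is precisely what produces the doubled sequence $1, 1, 2, 2, \dots$ and the function $\Pi$.
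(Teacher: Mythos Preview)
Your proof is correct and follows essentially the same approach as the paper: both reduce to lower bounding $\prod_{j'\neq j}|\alpha_j-\alpha_{j'}|$ via the inequality $2\sin(\pi\delta/n)\geq 4\delta/n$ and then observe that at most two other evaluation points can sit at each circular distance from $\alpha_j$, producing the $1,1,2,2,\dots$ pattern and hence the factor $\Pi(m-1)$. Your write-up is in fact slightly more explicit than the paper's, which states the distance bound $\tfrac{4}{n}$ without spelling out the sine inequality.
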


\begin{proof}
We need to lower bound the product
\begin{equation*}
    \lvert A_{jj} \rvert^{-1} = \prod_{j' \neq j} \lvert \alpha_j - \alpha_{j'} \rvert = \prod_{\substack{k \in \mathcal{K} \\ k \neq k_j}} \lvert \omega_n^{k_j} - \omega_n^k \rvert.
\end{equation*}
The distance between $\omega_n^{k_j}$ and the closest $n$th roots of unity is $\tfrac{4}{n}$. There are two such closest $n$th roots of unity on either side of $\omega_n^{k_j}$, namely $\omega_n^{k_j + 1}$ and $\omega_n^{k_j - 1}$. The next closest $n$th roots of unity are at a distance of $\tfrac{8}{n}$ and there are again two of these. Continuing in this fashion, we obtain
\begin{align*}
    \prod_{\substack{k \in \mathcal{K} \\ k \neq k_j}} \lvert \omega_n^{k_j} - \omega_n^k \rvert &\geq \frac{4}{n} \cdot \frac{4}{n} \cdot \frac{8}{n} \cdot \frac{8}{n} \cdots \\
    &= \frac{4^{m - 1}}{n^{m - 1}} \big(\underbrace{1 \cdot 1 \cdot 2 \cdot 2 \cdots}_{\text{$m - 1$ terms}} \big) \\
    &= \frac{4^{m - 1}}{n^{m - 1}} \Pi(m - 1).
\end{align*}
Therefore, $\lVert A \rVert_2 = \max_j \lvert A_{jj} \rvert \leq \frac{n^{m - 1}}{4^{m - 1} \Pi(m - 1)}$.
\end{proof}

\begin{lemma}\label{lem:small_C_matrix_2_norm_upper_bound}
Let $C$ be as in \cref{lem:vandermonde_inverse_decomposition}. If $|\beta_i| = 1$ for all $i$, then $\lVert C \rVert_\mathrm{max} \leq 2^{m-1}$.
\end{lemma}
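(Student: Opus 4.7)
The plan is to give a direct entrywise bound via the triangle inequality. By \cref{lem:vandermonde_inverse_decomposition}, each entry of $C$ has the form
\[
    C_{ij} = \prod_{j' \neq j} (\beta_i - \alpha_{j'}),
\]
which is a product of exactly $m - 1$ factors. Since $\lVert C \rVert_\mathrm{max} = \max_{i,j} |C_{ij}|$, it suffices to bound each $|C_{ij}|$ by $2^{m-1}$.

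The key observation is that every point appearing in these factors lies on the unit circle: the $\alpha_{j'}$ are $n$th roots of unity, hence $|\alpha_{j'}| = 1$, and we are told $|\beta_i| = 1$. Applying the triangle inequality factor-by-factor yields
\[
    |\beta_i - \alpha_{j'}| \leq |\beta_i| + |\alpha_{j'}| = 2.
\]
Multiplying these inequalities over the $m - 1$ values of $j'$ immediately gives $|C_{ij}| \leq 2^{m - 1}$, completing the argument.

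There is essentially no obstacle to overcome; the proof is a one-line triangle inequality applied inside a product. The bound is deliberately coarse — it does not exploit the spacing of the roots of unity at all — but this is consistent with the generality of the hypothesis (only $|\beta_i| = 1$ is assumed, with no information about argument). Notice the contrast with \cref{lem:large_C_matrix_2_norm_upper_bound}, where a specific placement $\beta_i = \omega_{2mn}\omega_m^{i-1}$ was fixed and a delicate \emph{lower} bound on distances was required; here the task is only an \emph{upper} bound for arbitrary unit-modulus $\beta_i$, so no geometric subtlety is needed.
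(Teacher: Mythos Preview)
Your proof is correct and matches the paper's own argument essentially line for line: bound each factor by $|\beta_i|+|\alpha_{j'}|=2$ via the triangle inequality and multiply over the $m-1$ factors. Your additional commentary contrasting this with \cref{lem:large_C_matrix_2_norm_upper_bound} is accurate and not needed for the proof itself.
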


\begin{proof}
Clearly, $\lvert \beta_i - \alpha_{j'} \rvert \leq 2$ for all $i, j'$. Thus,
\begin{equation*}
    \lvert C_{ij} \rvert = \prod_{j' \neq j} \lvert \beta_i - \alpha_{j'} \rvert \leq 2^{m - 1}. \qedhere
\end{equation*}
\end{proof}

\begin{theorem}\label{thm:small_vandermonde_inverse_2_norm_upper_bound}
Let $W$ be an $m \times m$ Vandermonde matrix whose evaluation points are distinct $n$th roots of unity for $m \leq n$. Then, $\lVert W^{-1} \rVert_2 \leq \frac{\sqrt{m}}{2^{m - 1} \Pi(m - 1)} n^{m - 1}$.
\end{theorem}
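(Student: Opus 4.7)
The plan is to mirror the strategy already deployed in \cref{thm:large_vandermonde_inverse_2_norm_upper_bound}, but using the two ``small-$m$'' lemmas (\cref{lem:small_A_matrix_2_norm_upper_bound} and \cref{lem:small_C_matrix_2_norm_upper_bound}) rather than the asymptotic-in-$n$ ones. Concretely, I would invoke the factorization $W^{-1} = B^{-1} C A$ from \cref{lem:vandermonde_inverse_decomposition} and choose the auxiliary evaluation points $\beta_i$ so that both $\lVert B^{-1}\rVert_2$ is well-controlled and the hypothesis $|\beta_i|=1$ of \cref{lem:small_C_matrix_2_norm_upper_bound} holds.

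The natural choice is $\beta_i = \omega_m^{i-1}$, i.e., the $m$th roots of unity. Then $B$ is an $m \times m$ Vandermonde matrix at the $m$th roots of unity, so $\tfrac{1}{\sqrt{m}} B$ is unitary and hence $\lVert B^{-1}\rVert_2 = \tfrac{1}{\sqrt{m}}$. Moreover $|\beta_i| = 1$ for all $i$, which is exactly the premise of \cref{lem:small_C_matrix_2_norm_upper_bound}, giving $\lVert C\rVert_{\max} \leq 2^{m-1}$ and therefore $\lVert C\rVert_2 \leq m\,\lVert C\rVert_{\max} \leq m\,2^{m-1}$ via the standard inequality between the operator and max norms on $m \times m$ matrices. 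Finally, \cref{lem:small_A_matrix_2_norm_upper_bound} yields $\lVert A\rVert_2 \leq \frac{n^{m-1}}{4^{m-1}\Pi(m-1)}$.

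Putting these three estimates together by submultiplicativity,
\begin{equation*}
    \lVert W^{-1}\rVert_2 \;\leq\; \lVert B^{-1}\rVert_2 \cdot \lVert C\rVert_2 \cdot \lVert A\rVert_2 \;\leq\; \frac{1}{\sqrt{m}} \cdot m\,2^{m-1} \cdot \frac{n^{m-1}}{4^{m-1}\Pi(m-1)} \;=\; \frac{\sqrt{m}}{2^{m-1}\Pi(m-1)}\, n^{m-1},
\end{equation*}
which is exactly the claimed bound.

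There is no real obstacle here: all the arithmetic is routine once the right $\beta_i$ are chosen. The only conceptual point worth emphasizing is that the choice $\beta_i = \omega_m^{i-1}$ differs from the choice $\beta_i = \omega_{2mn}\omega_m^{i-1}$ used in \cref{thm:large_vandermonde_inverse_2_norm_upper_bound}; in the present regime we do not need the $\omega_{2mn}$ shift (which was there to keep $\beta_i$ away from the $n$th roots of unity), because \cref{lem:small_C_matrix_2_norm_upper_bound} only uses the crude bound $|\beta_i - \alpha_{j'}| \leq 2$ rather than a lower bound on the distance of $\beta_i$ from the evaluation points.
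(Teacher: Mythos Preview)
Your proposal is correct and follows essentially the same argument as the paper: the factorization $W^{-1}=B^{-1}CA$ with $\beta_i=\omega_m^{i-1}$, the bound $\lVert B^{-1}\rVert_2=1/\sqrt{m}$, and the estimates from \cref{lem:small_A_matrix_2_norm_upper_bound} and \cref{lem:small_C_matrix_2_norm_upper_bound} combined via $\lVert C\rVert_2\le m\lVert C\rVert_{\max}$ are exactly what the paper does. Your closing remark explaining why the $\omega_{2mn}$ shift is unnecessary here is a nice addition not present in the paper.
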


\begin{proof}
Write $W^{-1} = B^{-1} C A$ by \cref{lem:vandermonde_inverse_decomposition} for $\beta_i = \omega_m^{i - 1}$. We can use $\lVert B^{-1} \rVert_2 = \frac{1}{\sqrt{m}}$, and \cref{lem:small_A_matrix_2_norm_upper_bound,lem:large_C_matrix_2_norm_upper_bound} to obtain
\begin{align*}
    \lVert W^{-1} \rVert_2 &\leq \lVert B^{-1} \rVert_2 \cdot \lVert C \rVert_2 \cdot \lVert A \rVert_2 \\
    &\leq \frac{1}{\sqrt{m}} \cdot m \lVert C \rVert_\mathrm{max} \cdot \lVert A \rVert_2 \\
    &\leq \sqrt{m} \cdot 2^{m - 1} \cdot \frac{n^{m - 1}}{4^{m - 1} \Pi(m - 1)} \\
    &= \frac{\sqrt{m}}{2^{m - 1} \Pi(m - 1)} n^{m - 1}. \qedhere
\end{align*}
\end{proof}

\begin{corollary}\label{cor:small_vandermonde_inverse_frobenius_norm_upper_bound}
Let $W$ be an $m \times m$ Vandermonde matrix whose evaluation points are distinct $n$th roots of unity for $m \leq n$. Then, $\lVert W^{-1} \rVert_F \leq \frac{m}{2^{m - 1} \Pi(m - 1)} n^{m - 1}$.
\end{corollary}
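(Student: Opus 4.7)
The plan is to deduce this corollary as an immediate consequence of \cref{thm:small_vandermonde_inverse_2_norm_upper_bound}. Since $W^{-1}$ is an $m \times m$ complex matrix, its Frobenius norm and its spectral norm are related by the standard inequality $\lVert X \rVert_F \leq \sqrt{m} \, \lVert X \rVert_2$, which follows by writing the Frobenius norm as the $\ell^2$ norm of the vector of singular values and bounding each of the $m$ singular values by the largest one, namely $\lVert X \rVert_2$.

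Applying this inequality to $X = W^{-1}$ and plugging in the bound from \cref{thm:small_vandermonde_inverse_2_norm_upper_bound} yields
\begin{equation*}
    \lVert W^{-1} \rVert_F \leq \sqrt{m} \cdot \lVert W^{-1} \rVert_2 \leq \sqrt{m} \cdot \frac{\sqrt{m}}{2^{m-1} \Pi(m-1)} n^{m-1} = \frac{m}{2^{m-1} \Pi(m-1)} n^{m-1},
\end{equation*}
which is precisely the claimed bound. There is no real obstacle here; the only ingredient beyond the previous theorem is the norm equivalence $\lVert \cdot \rVert_F \leq \sqrt{m} \lVert \cdot \rVert_2$ for $m \times m$ matrices, which is standard and requires no additional work.
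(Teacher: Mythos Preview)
Your proof is correct and matches the paper's intended approach: the corollary is stated without proof immediately after \cref{thm:small_vandermonde_inverse_2_norm_upper_bound}, and the only missing step is precisely the standard inequality $\lVert X \rVert_F \leq \sqrt{m}\,\lVert X \rVert_2$ for $m \times m$ matrices that you invoke.
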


For $m = 1$ the above corollary gives $\lVert W^{-1} \rVert_F \leq 1$, which matches the exact value. For $m = 2$ the upper bound is $n$, but it is easy to verify that $\frac{n}{2}$ would be a better bound.

The results in this appendix show that $m \times m$ Vandermonde matrices with $n$th roots of unity as evaluation points have condition numbers that grow polynomially in $n$ when either $m$ or $n - m$ is kept constant. On the other hand, the result is not true if $m = n / 2$ as shown in \cite[Theorem~6.2]{pan2016how}, since in this case the condition numbers grow exponentially in $n$.

\bibliographystyle{ieeetr}
\bibliography{bib}

\end{document}